\algnewcommand\algorithmicforeach{\textbf{for each}}
\newdimen{\algindent}
\algnewcommand\LeftComment[2]{%
\hspace{#1\algindent}$\triangleright$ \eqparbox{COMMENT}{#2} \hfill %
}
\algnewcommand\LeftCommentNoTriangle[2]{%
\hspace{#1\algindent} \eqparbox{COMMENT}{#2} \hfill %
}
\DeclarePairedDelimiter{\nint}\lceil\rceil
\DeclareMathOperator*{\argmax}{argmax}
\newcommand{\gd}[2]{\mathcal G ( #1, #2) }
\newtheorem{theorem}{Theorem}
\newtheorem{corollary}{Corollary}
\newtheorem{lemma}{Lemma}
\DeclarePairedDelimiterX\Basics[1](){ #1}
\begin{document}
	\title{Multiple Access in Dynamic Cell-Free Networks: Outage Performance and Deep Reinforcement Learning-Based Design} 
%	\author{
%		\IEEEauthorblockN{Author, {\it Member, IEEE},  
%			 Author, {\it Fellow, IEEE,}} Author, {\it Member, IEEE} } 
   
\author{Yasser Al-Eryani, Mohamed Akrout, and Ekram Hossain \thanks{Y. Al-Eryani and E. Hossain are with the Department of Electrical and Computer Engineering at the University of Manitoba, Canada (emails: aleryany@myumanitoba.ca, Ekram.Hossain@umanitoba.ca). Mohamed Akrout is affiliated with the Department of Computer Science at the University of Toronto, Canada (email: makrout@cs.toronto.edu). 
The work was supported by a Discovery Grant form the Natural Sciences and Engineering Research Council of Canada (NSERC).}}
 	\maketitle
	\begin{abstract}
 In future cell-free (or cell-less) wireless networks, a large number of devices in a geographical area will be served simultaneously in non-orthogonal multiple access scenarios by a large number of distributed access points (APs), which coordinate with a centralized processing pool. 
 For such a centralized cell-free network with static predefined beamforming design, we first derive a closed-form expression of the uplink  per-user probability of outage.
To significantly reduce the complexity of joint processing of users' signals in presence of a large number of devices and APs, we propose a novel dynamic cell-free network architecture. In this architecture, the distributed APs are partitioned (i.e. clustered) among a set of subgroups with each subgroup acting as a virtual AP equipped with a distributed antenna system (DAS). The conventional static cell-free network is a special case of this dynamic cell-free network when the cluster size is one. For this dynamic cell-free network, we propose a successive interference cancellation (SIC)-enabled signal detection method and an inter-user-interference (IUI)-aware DAS's receive diversity combining scheme. 
%Such a clustering is conducted based on the current channel state information (CSI) between APs and all users within the network coverage area.
We then formulate the general problem of clustering APs and designing the beamforming vectors with an objective to maximizing the sum rate or maximizing the minimum rate. To this end, we propose a hybrid deep reinforcement learning (DRL) model, namely, a deep deterministic policy gradient (DDPG)-deep double Q-network (DDQN) model, to solve the optimization problem for online implementation with low complexity.
%The proposed dynamic cell-free network with hybrid DRL model offers a significant complexity-delay reduction with performance loss satisfactorily compensated by the utilization of SIC detection at the receiver side. 
The DRL model for sum-rate optimization significantly outperforms that for maximizing the minimum rate in terms of average  per-user rate performance. Also, in our system setting, the proposed DDPG-DDQN scheme is found to achieve around $78\%$ of the rate achievable through an exhaustive search-based design. 
%However, complexity and processing time was decreased significantly by the adoption of such DRL model.  
\end{abstract}
\begin{IEEEkeywords}
Cell-free architecture, receive diversity, successive interference cancellation (SIC), outage probability, clustering, deep reinforcement learning (DRL), deterministic policy gradient (DPG), double Q-Network (DDQN).
\end{IEEEkeywords}
 
%==================================================================================
\section{Introduction}
%====================================================================================

\subsection{Background and Related Work}
Network densification through the deployment of more access points(APs)/base stations (BSs) per unit area  and connecting them in a multi-tier network architecture is a natural way to improve cellular network capacity~\cite{HetNet_1}. However, network densification gives rise to critical issues such as increased  signal interference (both inter-cell interference [ICI] and inter-user interference [IUI]), as well as increased network complexity and capital and operating expenditures.
Many interference-tolerant and low-cost technologies including massive multiple-input multiple-output antenna (mMIMO), transmit/receive diversity (TRD), and distributed antenna system (DAS)~\cite{massiveMIMO_1,R_Diversity_1,R_Diversity_2,Distributed_Ant_1} have therefore been developed. 
The performance gains in these technologies are mainly achieved by increasing the received power of the desired signal without any coordination among  interference sources in other cells~\cite{UltraDense_1}. The interference can however be minimized by coordinating wireless transmitters and/or receivers (either APs or user equipment [UE]).  For instance, coordinated multi-point (CoMP) technique allows adjacent APs to coordinate their transmission and reception to serve cell-edge UEs. This will turn the strongest interference signals into a desired ones at the cost of decreased spectral efficiency \cite{CoMP6146494}.

Recently, the concept of cooperation among distributed APs has been generalized in which all APs serve all users  (i.e. both cell-edge and cell-center users) within the network coverage area. 
This concept can be implemented through several techniques such as cloud radio access network (C-RAN)  \cite{CRAN,CRAN_1}, cell-free networks (or cell-less networks) \cite{Cell_Less_1,Cell_Less_2,Cell_Less_3,Cell_Less_4} and Generalized CoMP (GCoMP) networks~\cite{DOMA,GCoMP}. 
In a C-RAN, a group of distributed remote radio heads (RRHs) that are geographically distributed in the network coverage area are used to perform all radio functionalities such as high frequency amplification, frequency up/down conversion and A/D and D/A conversion. 
The users' baseband signals coming from all RRHs are then sent to a virtual baseband unit (BBU) pool that is located in the network cloud and is responsible to perform all baseband functionalities such as signal detection, etc. 
%C-RAN allows the use of different MA (different wireless technologies) within the network coverage area in a hierarchical setup (similar to HetNets) and was potentially developed for conventional multi-tier cellular networks. 
In a cell-free network architecture, a group of distributed APs cooperate to serve all active users within the network coverage area simultaneously using the same frequency-time resources. This makes the entire network act as a massive distributed antenna system (DAS) that covers the entire network coverage area \cite{DAS_1}. 
This network architecture however will require a huge processing capability of the BBU pool (or the central processing unit [CPU]), especially with the massive increase on the number of users per unit area in beyond 5G/6G networks \cite{6G8412482,DBLP:journals/corr/abs-1901-07106}. 

Several studies on cell-free networks have investigated performance metrics such as per-user transmission rate \cite{Cell_Less_3},  per-user packet delay \cite{Cell_Less_Delay_1}, implementation issues such as pilot contamination \cite{Cell_Less_1}, network backhauling/fronthauling \cite{Fronthaul_1,Fronthaul_2}, beamforming techniques \cite{Cell_Less_Beamforming_1}. 
In \cite{Cell_Less_1}, it was found that channel estimation error resulting from non-orthogonal pilot sequences tends to significantly decrease the network performance.  
The authors in \cite{Fronthaul_2} evaluated the per user performance under limited fronthaul link capacity and showed that the network performance degrades as the number of active users increases under limited fronthaul link capacity. Many uplink/downlink beamforming techniques were developed for cell-free networks. In \cite{Cell_Less_2}, the authors proposed the use of  conjugate beamforming on the downlink and matched filtering on the uplink to multiplex/de-multiplex signals to/from different users. 
It was also shown that using these beamforming techniques, the inter-user interference (IUI) goes to zero as the number of APs goes to infinity. 
Different uplink/downlink optimization techniques for cell-free decoding/precoding vectors that are based on maximizing the weighted sum-rate (WSR) or users' minimum rate were proposed in the literature. 
For instance, the authors in \cite{Cell_Free_Sum_Rate_1} proposed a low-complexity algorithm that solve the problem of maximizing the WSR of downlink cell-free system (mixed non-convex and combinatorial optimization problem) while maximizing the minimum per-user rate was modeled as a quasi-convex problem and solved by different iterative algorithms (such as bisection search and gradient descent) \cite{Cell_Less_1,Cell_Less_2,Cell_Less_3}. 

%\footnote{We may refer to this as successive interference cancellation (SIC) system rather than NOMA system since in cell-free networks, users already simultaneously transmitting and receiving throughout the same frequency band.}

To further enhance the performance of cell-free networks, non-orthogonal multiple access (NOMA) method can be used to suppress some of the in-band interfering signals
\cite{Cell_Free_NOMA_3,Cell_Free_NOMA_1,Cell_Free_NOMA_5}. 
In \cite{Cell_Free_NOMA_3}, downlink users were assumed to be divided into a number of NOMA clusters with successive interference cancellation (SIC) operation applied at every cluster members and the achievable per user transmission rate was calculated. The authors in \cite{Cell_Free_NOMA_1} proposed an adaptive NOMA/OMA selection scheme such that the downlink per-user transmission rate is maximized. It was also found that the performance of NOMA scheme outperforms that of cell-free OMA when the number of users is relatively high. Additionally, in \cite{Cell_Free_NOMA_5}, a spectral efficiency maximization algorithm for uplink NOMA-enabled cell-free network was proposed where the authors showed that a better performance can be achieved by controlling the per-user transmission power. 

%********* Talk about channel estimation-based deep learning for cell-free
%====================================================================================
\subsection{Motivation and Contributions}
%===================================================================================
%As can be inferred from the literature review presented in previous section, 

In the existing literature, performance analysis of cell-free networks has focused only on the derivation of an approximate mathematical limits of the per-user transmission rate.
This is due to the fact that other important  performance metrics such as the probability of outage [or alternatively coverage probability] are significantly difficult to be derived in a closed-form. Furthermore, for the cell-free networks, beamforming techniques (uplink and downlink) were basically developed and tested for relatively small number of users and APs. The reason is that computing the beamforming vectors for massive numbers of users and APs using the conventional algorithms (such as bisection search and gradient descent) gives rise to real-time implementation issues such as convergence time and computational complexity. In  \cite{GCoMP}, we proposed a downlink variable-order clustering scheme for APs that divides the APs into subgroups where every group is dedicated to serve a subset of the active users. This decreases the lengths of the beamforming vectors for every subgroup of APs by a factor that is linearly proportional to the clustering order. The performance degradation caused by clustering of the APs was compensated by adopting NOMA detection  in every cluster. Additionally,  clustering of APs was based on the channel gains of different users and APs.

This paper extends the work in \cite{GCoMP} for massive {\em uplink} multiple access in cell-free networks. For a static cell-free network, we first derive a closed-form expression of the per-user probability of outage. To reduce the complexity of joint processing of signals from all users in a static cell-free network, we propose a dynamic clustering scheme for APs. For real-time implementation of both dynamic AP clustering and uplink beamforming, we develop a deep reinforcement learning (DRL) scheme, namely, the hybrid Deep Deterministic Policy Gradient (DDPG)-deep double Q-network (DDQN)scheme. The major contributions on this paper can be summarized as follows:

\begin{itemize}
    \item For uplink static cell-free networks, we derive an accurate closed-form expression for the per-user probability of outage by exploiting the \textit{Welch-Satterthwaite} approximation \cite{Satterthwaite1946}.
    \item To  significantly decrease the signal processing complexity at the CPU for the static cell-free network, we propose a clustering scheme that dynamically partitions the APs into subsets with each subset acting as a virtual AP within a DAS system. We also propose an SIC-based signal detection scheme for non-orthogonal multiple access in a dynamic cell-free network and a modified DAS combining scheme that considers inter-user interference (IUI).
    \item We formulate a general problem to jointly optimize the clustering of APs and the beamforming vectors such that the uplink users' sum-rate is maximized or the minimum user rate is maximized.
    \item To solve the general optimization problem, we propose and design a novel hybrid DRL scheme based on DDPG-DDQN model.
    \item We study and compare different performance metrics of conventional static and those of dynamic cell-free networks under different number of users and APs.  
\end{itemize}
The rest of this paper is organized as follows.  The static cell-free network model and the analysis of outage performance for this network model are presented in Section II. In Section III, we present the dynamic cell-free network architecture and the corresponding outage performance analysis. For the dynamic cell-free network model, in Section IV, we present a successive interference cancellation (SIC)-aided signal detection scheme and a diversity combining scheme. Also, the joint optimization problem of AP clustering and  beamforming design for the dynamic cell-free model is formulated in this section. In Section V, we propose a novel hybrid DRL method that jointly performs AP clustering and optimization of the beamforming vectors. Numerical and simulation results are presented in Section VI
before the paper is concluded in Section VII. 

%===================================================================================
\textbf{Notations:} For a random variable (rv) $X$,  $F_X(x)$ and $f_X(x)$  represent cumulative distribution function (CDF) and probability density function (PDF), respectively.  $\mathbb P ( \cdot )$ and $\mathbb E[\cdot ]$ denote probability and  expectation. For a given matrix $\bm{A}\in \mathbb{C}^{M\times N}$, $A^H$ represents the Hermitian transpose of $\bm{A}$. A PDF expression of Nakagami-$\mathcal{M}$ rv is given by $f_X(x)=\frac{2\mathcal{M}^{\mathcal{M}}}{\Gamma(\mathcal{M}) {\Omega}^{\mathcal{M}}}x^{2\mathcal{M}-1}e^{\frac{\mathcal{M}}{\Omega}x^2}$ while a Gamma  rv is  denoted by $X \thicksim  \mathcal G (\alpha, \beta)$, with PDF as $f_{X}(x) = \frac{\beta^\alpha }{\Gamma(\alpha)} x^{\alpha -1} e^{- \beta x },  \quad x >0$, where $ \beta >0$, $ \alpha \geq  1 $, and $ \Gamma(z)$ is the Euler's Gamma function. The base of $\log(x) $ is  $2$. 
\section{Static Cell-Free Network Architecture}
%=============================================================

\subsection{Network Model}
We consider an uplink  network with $M$ single-antenna APs and $K$ single-antenna UEs that are located at fixed locations within a certain coverage area (Fig.~\ref{System_Model}).
\begin{figure}[htb]
		\centering
		\includegraphics[height=6cm, width=6cm]{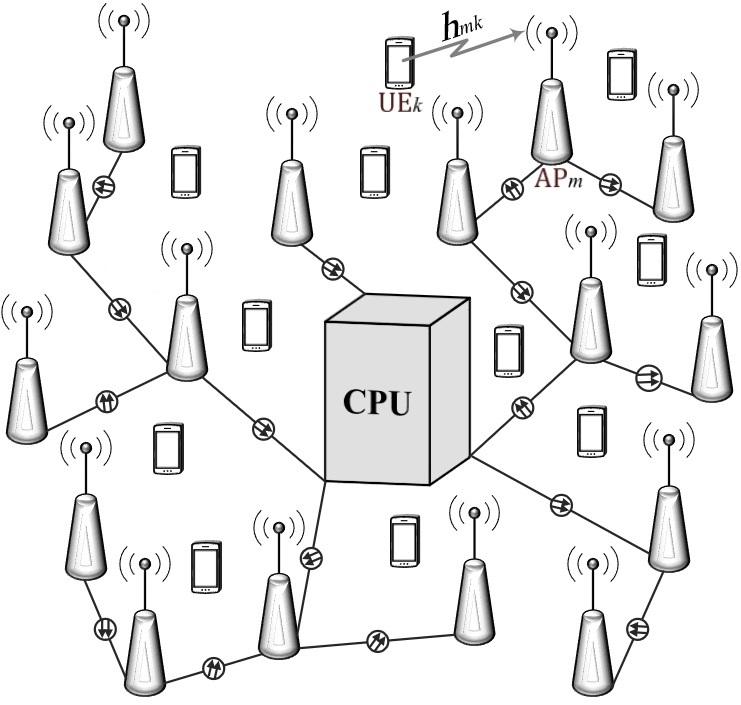}
		%\caption{Example scenario:(a) Static, (b) Dynamic. }\label{System_Model}
		\caption{Static cell-free network model.}\label{System_Model}
	\end{figure}
All APs are connected to each other through backhaul links to form a static cell-free network architecture \cite{Cell_Less_1}. 
This network setup enables the distributed APs to collaborate simultaneously to serve all users within the network coverage area.
The network is equipped with a CPU pool that performs the last-stage processing tasks to detect/decode signals from every user.	

The channel state information (CSI), which will be required at the CPU for signal detection for individual UEs, is assumed to be achieved through training the pilot sequences that are not completely orthogonal.
%This accurately evaluate the effect of CSI estimation errors under simultaneous massive processing of signals.
We assume that the channel gain between the $k\text{-th}$ user and the $m\text{-th}$ AP follows the following probabilistic model:
\begin{equation}\label{Channel_Model}
    g_{mk}={L}_{mk}^{-\kappa}\,h_{mk},
\end{equation}
where $L_{mk}=||d_{mk}||$ is the Euclidean distance between the $k\text{-th}$ user and the $m\text{-th}$ AP, $\kappa$ is the path-loss exponent ($\kappa\geq 2$) and $h_{mk}$ is the small-scale channel fading gain between the $k\text{-th}$ UE and the $m\text{-th}$ AP. 
We assume that $h_{mk}$ follows Nakagami-$\mathcal{M}_{mk}$ distribution with spreading and shape parameters $\mathcal{M}_{mk}$ and $\Omega_{mk}$, respectively, i.e. $|h_{mk}|^2\thicksim \mathcal{G}(\alpha_{mk},\beta_{mk})$, where $\alpha_{mk}=\mathcal{M}_{mk}$ and $\beta_{mk}=\frac{\mathcal{M}_{mk}}{\Omega_{mk}}$. 
%This assumption complies with the nature of next generations of wireless networks at which a large numbers of users and APs are densely distributed in a small scattered geographical areas which makes the use of significantly high range of frequencies (such as mmWaves and 5G new band [NB]) with the existence of strong signal component (not necessarily line-of-sight [LOS]). 
For simplicity of analysis, we assume that $L_{mk}^{-\kappa} (\forall~m$ and $k$) is known. We have $|g_{mk}|^2\thicksim \mathcal{G}(\alpha_{mk},\beta_{mk}/L_{mk}^{-2\kappa})$.
We assume that $g_{mk}(\forall m=1, \dots, M$ and $k=1, \dots, K$) belong to a set of independent but not identically distributed (i.n.d) rvs. 
%%*************************************************************************
%=======================================================================
\subsection{Uplink Network Training (CSI Acquisition)}
Under the assumption that large-scale fading gain ($L_{ {m}k}^{-\kappa}$) for all users is known, the aim of network training is to estimate the small-scale fading component of the overall channel gain ($h_{ {m}k}$). 
This can be achieved by assigning different pilot sequences for every user denoted by ${\bm{\varphi}}_{ {m} k}=\left[\varphi_{mk, 1} \dots \varphi_{mk, \tau_p} \right]^H$ such that $||{\bm{\varphi}}_{ {m} k}||^2=1$, where $\tau_p\leq \tau_c$ is the length of the length of pilot training sequence (in samples), which is less than or equal the channel coherence time ($\tau_c$). Accordingly, the received pilot vector at $m$-th AP is given by 
\begin{equation}
\label{Y_Pilot_Vector}
\bm{y}_{\textit{p}, {m} }=\sum_{k=1}^{K}\sqrt{\tau_p \rho_k}g_{ {m} k}\bm{\varphi}_{mk}+\bm{\eta}_{ {m} },
\end{equation}
where $\rho_k$ is the normalized transmitted power for each symbol of the $k$-th user's pilot vector and $\bm{\eta}_m\in \mathbb{C}^{\tau_p\times 1}$ is the additive white Gaussian noise (AWGN) vector related to pilot symbols with independent and identically distributed (i.i.d) rvs, i.e. ${\eta}_{i,m}\thicksim \mathcal{CG}\left(\mu_{i,m}, \sigma_{i,m}^2/2\right), \forall i=1, \dots, \tau_p$.
The goal is to find the best estimate of $g_{ {m} k}$ (denoted by $ \hat{g}_{ {m} k}$) given the vector of observations ($\bm{y}_{\textit{p}, {m} k}$). 
This can be optimally achieved by using the maximum {\em a posteriori} decision rule (MAP).
The Bayesian estimator of $g_{ {m} k}$ is found to be identical to that of the minimum mean square method (MMSE) \cite{Statistical_Signal_Processing, MMSE_1}.
Accordingly, the best estimate of $g_{ {m} k}$ can be expressed as \cite{Cell_Less_2}
\begin{dmath}
    \label{g_Estimate_1}
     \hat{g}_{ {m} k}=\frac{\sqrt{\tau_p \rho_k}L_{ {m} k}^{-\kappa}}{\rho_c\sum_{l=1, l\neq k}^{K}\rho_kL^{-\kappa}_{ {m} l}|\bm{\varphi}_{mk}^H\bm{\varphi}_{ml}|^2+1}\bm{\varphi}^H_{mk}\bm{y}_{\text{p}, {m} }
     \\
     = \mathcal{E}_{mk}\sqrt{\tau_p\rho_k}g_{mk}+\sum_{l=1, l\neq k}^K \mathcal{E}_{mk}\sqrt{\tau_p\rho_k}|\bm{\varphi}^H_{mk} \bm{\varphi}_{ml}|g_{ml}+\mathcal{E}_{mk}|\bm{\varphi}^H_{mk} \bm{\eta}_m|,
\end{dmath}
where $\mathcal{E}_{mk}=\frac{\sqrt{\tau_p \rho_k}L_{ {m} k}^{-\kappa}}{\rho_c\sum_{l=1, l\neq k}^{K}\rho_kL^{-\kappa}_{ {m} l}|\bm{\varphi}_{mk}^H\bm{\varphi}_{ml}|^2+1}$. 
Note that, if all users assigned a set of mutually orthogonal pilot sequences (i.e. $|\bm{\varphi}^H_{mk} \bm{\varphi}_{ml}|=0, k\neq l$), the estimated small-scale channel fading in (\ref{g_Estimate_1}) reduces to a scaled version of the exact fading gain plus a relatively small AWGN noise portion.
However, for certain applications (e.g. mMTC applications) and due to length limitations of $\tau_p$, non-orthogonal pilot signals has to be used among some active users.

%%**************************************************************************
%===================================================================================================
\subsection{Outage Performance}
\subsubsection{Distribution of Users' SINR}
In the static cell-free uplink network with $M$ APs and $K$ users, the distributed APs receive signals from all users within their coverage area and then forward the baseband signal components through the fronthaul links to the CPU pool. The CPU performs the detection of each user's signal separately. This process can be implemented by optimizing a user's transmission power and/or the beamforming vectors at every user's detector.
At a certain CPU detector module, the received signal used in the detection of the $k\text{-th}$ user's component is given by
\begin{dmath}
 y_k=\sum_{m=1}^Mw_{mk}\left[\sum_{l=1}^K\hat{g}_{ml}\sqrt{p_l}x_l+\tilde{\eta}_m\right]
    =
    \underbrace{
    \sqrt{\tau_p\rho_kp_k}x_k\sum_{m=1}^Mw_{mk}\mathcal{E}_{mk}g_{mk}}
    _{\text{Desired Signal}}
    +
    \sum_{m=1}^M
    w_{mk}
     \left[
     \underbrace{    \sum_{l=1, l\neq k}^K\sqrt{\tau_p\rho_lp_l}x_l\mathcal{E}_{ml}g_{ml}
    }
    _{\text{Inter-User Interference}}
    \\
    +
    \underbrace{
    \sum_{l=1, l\neq k}^K\sqrt{\tau_p\rho_kp_k}x_k\mathcal{E}_{mk}|\bm{\varphi}_{mk}^H\bm{\varphi}_{ml}|g_{ml}
    +
    \sum_{l=1, l\neq k}^K \sum_{\tilde{l}=1, \tilde{l}\neq l}^K\sqrt{\tau_p\rho_lp_l}x_l\mathcal{E}_{ml}|\bm{\varphi}_{ml}^H\bm{\varphi}_{m\tilde{l}}|g_{m\tilde{l}}}
    _{\text{Non-Orthogonal Pilots' Related Estimation error}}
        \right]
    \\
    +
    \sum_{m=1}^M
    w_{mk}
    \left[
          \underbrace{ \sqrt{p_k}x_k|\bm{\varphi}_{mk}^H\bm{\eta}_m|
    +
    \sum_{l=1, l\neq k}^K\sqrt{p_l}x_l|\bm{\varphi}_{ml}^H\bm{\eta}_m|
    }
    _{\text{AWGN's Related Estimation Error}}
    +
              \underbrace{  \tilde{\eta}_m}
    _{\text{AWGN Component}}
    \right]
    ,\label{Received_Signal_1} 
\end{dmath}
where $w_{mk}$ is the $m$-th element of the $k$-th user's beamforming vector such that $0\leq w_{mk}\leq 1$, $p_k$ is the uplink transmission power of the $k$-th user such that $0\leq p_k\leq P_k$, where $P_k$ is the maximum allowable transmission power for the $k\text{-th}$ user, $x_k$ is the transmitted symbol from the $k\text{-th}$ user such that $\mathbb{E}[|x_k|^2]=1$ and ${\tilde{\eta}}_m$ is the AWGN at the input of the $m$-th AP such that $\tilde{\eta}_m\thicksim \mathcal{CG}\left(\tilde{\mu}_m, \tilde{\sigma}_m^2/2\right)$. 
We assume that $\tilde{{\eta}}_m, \forall m=1, \dots, M$ are from a set of i.i.d rvs.

Accordingly, the SINR of the $k\text{-th}$ user related to $y_k$ at (\ref{Received_Signal_1}) can be expressed as
\begin{dmath}
    \gamma_k=
    \frac
    {
    \sum_{m=1}^M|\tilde{g}_{mk}|^2
    }
    {
    \sum_{m=1}^M
    \left[
    \sum_{l=1, l\neq k}^K|\tilde{g}_{ml}|^2
    +
    \sum_{\dot{l}=1, \dot{l}\neq k}^K
    \sum_{\ddot{l}=1, \ddot{l}\neq \dot{l}}^K
    |\tilde{g}_{m{\ddot{l}}}|^2
    +
    \sum_{\Breve{l}=1, \Breve{l}\neq k}^K|\tilde{g}_{m{\Breve{l}}}|^2
    \right]+1
    }
%    =
%    \frac{
%    \sum_{m=1}^M|\tilde{g}_{mk}|^2
 %   }
%    {
 %   \sum_{m=1}^M\sum_{l=1, l\neq k}^K|\tilde{g}_{mk}|^2+1
%    }
,\label{SINR_1}
\end{dmath}
where $|\tilde{g}_{mk}|^2\thicksim \mathcal{G}\left(\tilde{\alpha}_{mk},\tilde{\beta}_{mk} \right)$, $|\tilde{g}_{ml}|^2\thicksim \mathcal{G}\left(\tilde{\alpha}_{ml},\tilde{\beta}_{ml} \right)$, $|\tilde{g}_{m{\dot{l}}}|^2\thicksim \mathcal{G}\left(\tilde{\alpha}_{m\dot{l}},\tilde{\beta}_{m\dot{l}} \right)$,  $|\tilde{g}_{m\ddot{l}}|^2\thicksim \mathcal{G}\left(\tilde{\alpha}_{m\ddot{l}},\tilde{\beta}_{m\ddot{l}} \right)$ and 
$|\tilde{g}_{m\Breve{l}}|^2\thicksim \mathcal{G}\left(\tilde{\alpha}_{m\Breve{l}},\tilde{\beta}_{m\Breve{l}} \right)$
 with
 $\tilde{\alpha}_{mk}=\mathcal{M}_{mk}$, 
 $\tilde{\beta}_{mk}=\frac{\mathcal{M}_{mk}\dot{\sigma}_{mk}L_{mk}^{2\kappa}}{\Omega_{mk}    w_{mk}^2\tau_p\rho_kp_k\mathcal{E}_{mk}^2}$, 
 $\tilde{\alpha}_{ml}=\mathcal{M}_{ml}$,  
 $\tilde{\beta}_{ml}=\frac{\mathcal{M}_{ml}\dot{\sigma}_{mk}L_{ml}^{2\kappa} }{\Omega_{ml}  w_{mk}^2\tau_p\rho_kp_k\mathcal{E}_{mk}^2|\bm{\varphi}_{mk}^H\bm{\varphi}_{ml}|^2}$, 
 $\tilde{\alpha}_{m\ddot{l}}=\mathcal{M}_{m\ddot{l}}$,  
 $\tilde{\beta}_{m\ddot{l}}=\frac{\mathcal{M}_{m\ddot{l}}\dot{\sigma}_{mk}L_{m\ddot{l}}^{2\kappa}}{\Omega_{m\ddot{l}}   w_{mk}^2\tau_p\rho_{\dot{l}}p_{\dot{l}}\mathcal{E}_{m{\dot{l}}}^2|\bm{\varphi}_{m\dot{l}}^H\bm{\varphi}_{m\ddot{l}}|^2}$,
 $\tilde{\alpha}_{m\Breve{l}}=\mathcal{M}_{m\Breve{l}}$,  
 $\tilde{\beta}_{m\Breve{l}}=\frac{\mathcal{M}_{m\Breve{l}}\dot{\sigma}_{mk}L_{m\Breve{l}}^{2\kappa}}{\Omega_{m\Breve{l}}   w_{mk}^2\tau_p\rho_{\Breve{l}}p_{\dot{l}}\mathcal{E}_{m{\dot{l}}}^2}$.\\ 
 $\dot{\sigma}_{mk}=\sum_{m=1}^M\left[\sum_{t=1}^{\tau_p}\frac{w_{mk}^2\sigma_m^2}{2}\left(p_k\varphi_{mk,t}+\sum_{l=1, l\neq k}^Kp_l\varphi_{ml,t}\right)+ w_{mk}^2\tilde{\sigma}^2_m/2\right]$.

Note that the cross-product terms between channel gains are averaged to zero due to the i.i.d assumption. 
Furthermore, the AWGN term in the denominator (denoted by $\dot{\sigma}_m$) is replaced by the PSD (variance) of the sum of $M$ i.i.d rvs. 
Our goal is to first find a closed-form expression for the PDF and CDF of $\gamma_k$ and then utilize the derived expressions in deriving some fundamental performance limits for the static cell-free network. \textbf{Lemma~\ref{l_pdfs}} gives an accurate approximation for the PDF of $\gamma_k$ in (\ref{SINR_1}).
	\begin{lemma} \label{l_pdfs}
    Let 
    $ |\tilde{g}_{mk}|^2 \thicksim \gd {\tilde{\alpha}_{mk}} {\tilde{\beta}_{mk}} $
    , 
    $ |\tilde{g}_{ml}|^2 \thicksim \gd {\tilde{\alpha}_{ml}} {\tilde{\beta}_{ml}} $
    , 
    $ |\tilde{g}_{m\ddot{l}}|^2 \thicksim \gd {\tilde{\alpha}_{m\ddot{l}}} {\tilde{\beta}_{m\ddot{l}}} $ 
    and 
    $ |\tilde{g}_{m\Breve{l}}|^2 \thicksim \gd {\tilde{\alpha}_{m\Breve{l}}} {\tilde{\beta}_{m\Breve{l}}} $
    where 
    $|\tilde{g}_{mk}|^2$
    ,
    $|\tilde{g}_{ml}|^2$
    ,
    $|\tilde{g}_{m\ddot{l}}|^2$
    and
    $|\tilde{g}_{m\Breve{l}}|^2$ are independent rvs with $m=1, \dots, M$ and $k=1, \dots, K$.  The PDF of  $ \gamma_k$ in (\ref{SINR_1}) with relatively large $M$ and $K$ is 
    \begin{dmath}	f_{\gamma_k}\left(\gamma\right)=
    \frac{\Dot{\beta}_{mk}^{\Dot{\alpha}_{mk}}\Dot{\beta}_{mk'}^{\Dot{\alpha}_{mk'}}e^{-\frac{\Dot{\beta}_{mk}\gamma-\Dot{\beta}_{mk'}}{2}}}
    {\Gamma\left(\Dot{\alpha}_{mk}\right)\left(\Dot{\beta}_{mk}\gamma+\Dot{\beta}_{mk'} \right)^{\frac{\Dot{\alpha}_{mk}+\Dot{\alpha}_{ml}+1}{2}}}
    W_{\frac{\Dot{\alpha}_{mk}-\Dot{\alpha}_{mk'}+1}{2},\frac{-\Dot{\alpha}_{mk}-\Dot{\alpha}_{mk'}}{2}}\left(\Dot{\beta}_{mk}\gamma+\Dot{\beta}_{mk'} \right) 
    ,\label{PDF_Centralized}		\end{dmath}
		where  
        $\Dot{\alpha}_{mk}$, $\Dot{\alpha}_{mk'}$, $\Dot{\beta}_{mk}$ and $\Dot{\beta}_{mk'}$ are defined in \textbf{Appendix A} and  $W_{\lambda,\mu}(x)$ is the  Whittaker function with parameters $\lambda$ and $\mu$ \cite[Eq. 9.222.1]{gradshteyn2000}.
	\end{lemma}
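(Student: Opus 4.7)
My plan is to apply the Welch-Satterthwaite moment-matching approximation twice---once to the numerator and once to the full bracketed sum in the denominator of (\ref{SINR_1})---and then evaluate the resulting ratio distribution via a Gradshteyn-Ryzhik integral representation of the Whittaker function.

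First I would observe that the numerator $\sum_{m=1}^M |\tilde{g}_{mk}|^2$ is a sum of $M$ independent but non-identically distributed Gamma random variables $\mathcal{G}(\tilde{\alpha}_{mk},\tilde{\beta}_{mk})$. By Welch-Satterthwaite \cite{Satterthwaite1946}, this is accurately approximated by a single Gamma random variable $Y \sim \mathcal{G}(\dot{\alpha}_{mk},\dot{\beta}_{mk})$, whose shape and rate are fixed by matching the first two moments of the original sum (this matching is exactly what Appendix A records). The approximation error shrinks as $M$ grows, consistent with the ``relatively large $M$'' hypothesis of the lemma.

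Next I would collect the bracketed term in the denominator, which groups the IUI, the pilot-contamination cross-terms, and the AWGN-induced estimation error. All four inner sums---indexed by $m$ together with one of the user indices $l$, $\dot{l}$, $\ddot{l}$, or $\Breve{l}$---are themselves sums of independent Gamma random variables. Applying Welch-Satterthwaite once more to this pooled sum yields a second Gamma surrogate $Z \sim \mathcal{G}(\dot{\alpha}_{mk'},\dot{\beta}_{mk'})$, with the explicit ``$+1$'' (normalised AWGN) left intact. Independence of $Y$ and $Z$ is inherited from the i.n.d.\ assumption on the $|\tilde{g}_{m\cdot}|^2$'s, so the problem reduces to the PDF of $\gamma_k = Y/(Z+1)$ for two independent Gammas. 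A change of variable $Y = \gamma(Z+1)$ gives
\[
f_{\gamma_k}(\gamma) = \int_0^\infty (z+1)\, f_Y(\gamma(z+1))\, f_Z(z)\, dz,
\]
and substituting the two Gamma densities yields, after the shift $u = z+1$, an integral of the type $\int_1^\infty u^{a-1}(u-1)^{b-1} e^{-cu}\, du$. This matches the standard integral representation \cite[Eq.\ 9.222.1]{gradshteyn2000} of the Whittaker function $W_{\lambda,\mu}$, and matching parameters reproduces (\ref{PDF_Centralized}) with the indicated index structure on $\dot{\alpha}$ and $\dot{\beta}$.

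The main obstacle is the bookkeeping in the second Welch-Satterthwaite step: the bracketed denominator collects four distinct families of Gamma terms whose rates $\tilde{\beta}_{ml}$, $\tilde{\beta}_{m\ddot{l}}$, $\tilde{\beta}_{m\Breve{l}}$ depend on pilot correlations $|\bm{\varphi}_{m\cdot}^H\bm{\varphi}_{m\cdot}|^2$ and on all users' transmit powers. Writing out the first two moments of the pooled sum and solving for a single shape/rate pair $(\dot{\alpha}_{mk'},\dot{\beta}_{mk'})$ is tedious but routine and is what Appendix A should carry out in detail. A secondary, minor obstacle is lining up the post-substitution exponents with the precise parameters of the Whittaker integral representation; once that identification is made, the closed form (\ref{PDF_Centralized}) drops out directly.
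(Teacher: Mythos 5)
Your proposal is correct and follows essentially the same route as the paper's Appendix A: Welch--Satterthwaite moment matching applied separately to the numerator and to the pooled bracketed denominator, followed by the ratio-distribution integral $\int w f_X(zw) f_Y(w-1)\,dw$ evaluated as a Whittaker function (the paper invokes \cite[Eq.~3.383.4]{gradshteyn2000} where you invoke the integral representation \cite[Eq.~9.222.1]{gradshteyn2000}, but these amount to the same computation). Your explicit retention of the $+1$ via the shift $u=z+1$ matches what the paper actually substitutes ($f_Y(w-1)$), even though its prose nominally approximates $Y+1\approx Y$.
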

    \begin{proof}
See \textbf{Appendix A}.
\end{proof}

%================================================================
\subsubsection{SINR Outage}
A receiver is in an outage if the SINR of the received signal falls bellow a certain predefined threshold value (denoted by $\gamma_{\text{th}}$). 
\textbf{Theorem \ref{Theorem_Outage}} below gives an accurate approximation for the average probability of outage of the $k
$-th user.
	\begin{theorem}\label{Theorem_Outage} \label{l_cdfs}
    If 
    $ |\tilde{g}_{mk}|^2 \thicksim \gd {\tilde{\alpha}_{mk}} {\tilde{\beta}_{mk}} $
    , 
    $ |\tilde{g}_{ml}|^2 \thicksim \gd {\tilde{\alpha}_{ml}} {\tilde{\beta}_{ml}} $
    , 
    $ |\tilde{g}_{m\ddot{l}}|^2 \thicksim \gd {\tilde{\alpha}_{m\ddot{l}}} {\tilde{\beta}_{m\ddot{l}}} $ 
    and 
    $ |\tilde{g}_{m\Breve{l}}|^2 \thicksim \gd {\tilde{\alpha}_{m\Breve{l}}} {\tilde{\beta}_{m\Breve{l}}} $
    where 
    $|\tilde{g}_{mk}|^2$
    ,
    $|\tilde{g}_{ml}|^2$
    ,
    $|\tilde{g}_{m\ddot{l}}|^2$
    and
    $|\tilde{g}_{m\Breve{l}}|^2$ are independent rvs with $m=1, \dots, M$ and $k=1, \dots, K$.  The probability of outage of the $k$-th user in a static cell-free network is 
    \begin{dmath}	P_{\text{out}}^{(k)}=1-\frac{\left(\frac{\Dot{\beta}_{mk'}}{\gamma_{\text{th}}\Dot{\beta}_{mk}} \right)^{\Dot{\alpha}_{mk'}}\Gamma\left(\Dot{\alpha}_{mk}+\Dot{\alpha}_{mk'} \right)}{\Dot{\alpha}_{mk'}\Gamma\left(\Dot{\alpha}_{mk} \right)\Gamma\left(\Dot{\alpha}_{mk'} \right)}
    {}_{2}F_{1}\left(\Dot{\alpha}_{mk'},\Dot{\alpha}_{mk}+\Dot{\alpha}_{mk'};1+\Dot{\alpha}_{mk'};-\frac{\Dot{\beta}_{mk'}}{\Dot{\beta}_{mk}} \frac{1}{\gamma_{\text{th}}}\right)
    ,\label{e_outage_1}		\end{dmath}
		where  
        $\gamma_{\text{th}}$ is the SINR threshold value and  ${}_2F_{1}(.)$ is the  Gauss hypergeometric function function \cite[Eq. 9.11.1]{gradshteyn2000}.
	\end{theorem}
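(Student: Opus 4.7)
The plan is to reduce the computation of $P_{\text{out}}^{(k)} = \mathbb{P}(\gamma_k \leq \gamma_{\text{th}})$ to the CDF of a ratio of two independent Gamma random variables, which can be written in closed form via Euler's integral representation of ${}_2F_1$. Looking at (\ref{SINR_1}), the numerator is a finite sum of independent (non-identically distributed) Gamma rvs and the denominator is another finite sum of independent Gamma rvs plus a deterministic noise constant. As in the proof of Lemma \ref{l_pdfs} (Appendix A), I would first invoke the Welch--Satterthwaite moment-matching approximation to collapse the numerator into a single Gamma variate $X \sim \mathcal{G}(\dot\alpha_{mk},\dot\beta_{mk})$ and the denominator (after absorbing the normalised noise offset, which is dominated by the interference sum for moderate to large $M,K$) into a single Gamma variate $Y \sim \mathcal{G}(\dot\alpha_{mk'},\dot\beta_{mk'})$, independent of $X$. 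With this, $\gamma_k \approx X/Y$.

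Next I would write $P_{\text{out}}^{(k)} = 1 - \mathbb{P}(X > \gamma_{\text{th}} Y)$ and condition on $Y$:
\begin{equation*}
\mathbb{P}(X>\gamma_{\text{th}}Y) = \int_{0}^{\infty} \frac{\dot\beta_{mk'}^{\dot\alpha_{mk'}}}{\Gamma(\dot\alpha_{mk'})} y^{\dot\alpha_{mk'}-1} e^{-\dot\beta_{mk'} y} \, \bar{F}_X(\gamma_{\text{th}} y) \, dy,
\end{equation*}
where $\bar F_X$ is the survival function of $X$. Equivalently (and more convenient), I would use the known PDF of the ratio $U=X/Y$ of two independent Gamma rvs,
\begin{equation*}
f_U(u) = \frac{\Gamma(\dot\alpha_{mk}+\dot\alpha_{mk'})}{\Gamma(\dot\alpha_{mk})\Gamma(\dot\alpha_{mk'})}\,\dot\beta_{mk}^{\dot\alpha_{mk}}\dot\beta_{mk'}^{\dot\alpha_{mk'}}\, \frac{u^{\dot\alpha_{mk}-1}}{(\dot\beta_{mk}u+\dot\beta_{mk'})^{\dot\alpha_{mk}+\dot\alpha_{mk'}}},
\end{equation*}
obtained from the Mellin/Jacobian argument $f_U(u)=\int_0^\infty y f_X(uy) f_Y(y)\,dy$, and then evaluate $\mathbb{P}(U>\gamma_{\text{th}})=\int_{\gamma_{\text{th}}}^{\infty} f_U(u)\,du$.

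The final manipulative step is to massage this tail integral into ${}_2F_1$ form. I would substitute $w = \dot\beta_{mk'}/(\dot\beta_{mk}\,u)$ to map $u\in[\gamma_{\text{th}},\infty)$ to $w\in[0,\dot\beta_{mk'}/(\dot\beta_{mk}\gamma_{\text{th}})]$, which reduces the integrand to $w^{\dot\alpha_{mk'}-1}(1+w)^{-(\dot\alpha_{mk}+\dot\alpha_{mk'})}$. After a second substitution $w = x t$ with $x$ equal to the upper limit, Euler's integral $ {}_2F_1(a,b;c;-x) = \frac{\Gamma(c)}{\Gamma(b)\Gamma(c-b)} \int_0^1 t^{b-1}(1-t)^{c-b-1}(1+xt)^{-a}\,dt$ with $(a,b,c)=(\dot\alpha_{mk}+\dot\alpha_{mk'},\dot\alpha_{mk'},\dot\alpha_{mk'}+1)$ collapses the integral to the exact prefactor and ${}_2F_1$ displayed in (\ref{e_outage_1}); using the symmetry ${}_2F_1(a,b;c;z)={}_2F_1(b,a;c;z)$ matches the stated ordering of arguments.

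The main obstacle is less algebraic than justificatory. The accuracy of the closed form rests on two compounded approximations that must be controlled: (i) that Welch--Satterthwaite is applied validly to both numerator and denominator even though the constituents are i.n.d.\ and, crucially, statistically dependent through the common beamforming weights $w_{mk}$ and pilot cross-correlations $|\bm\varphi_{mk}^H\bm\varphi_{ml}|$; and (ii) that the additive noise offset in the denominator can be folded into $\dot\beta_{mk'}$ (or neglected) without breaking the tail behaviour near $\gamma_{\text{th}}$. Both are reasonable in the large-$M$, large-$K$ regime announced in Lemma \ref{l_pdfs}, but they are what converts an otherwise intractable Whittaker-function CDF into the clean ${}_2F_1$ expression of Theorem \ref{Theorem_Outage}.
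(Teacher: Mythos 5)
Your proposal is correct and follows essentially the same route as the paper's Appendix B: a Welch--Satterthwaite reduction of the numerator and denominator of (\ref{SINR_1}) to independent Gamma variates $X$ and $Y$ (with the additive $+1$ absorbed/neglected), followed by evaluation of $\mathbb{P}(X\le \gamma_{\text{th}}Y)$ for the ratio of two independent Gammas. The only difference is cosmetic: the paper conditions on the denominator and invokes the tabulated integral \cite[Eq. 6.455.2]{gradshteyn2000} together with the transformation \cite[Eq. 15.3.7]{1965}, whereas you integrate the beta-prime density of $X/Y$ over the tail and reach the stated ${}_{2}F_{1}$ directly via Euler's integral representation --- both yield (\ref{e_outage_1}) identically, and your closing remarks correctly identify the two approximations (moment matching of dependent i.n.d.\ terms, and folding in the noise offset) on which the accuracy of the closed form actually rests.
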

    \begin{proof}
See \textbf{Appendix B}.   
\end{proof}

One major drawback of the static cell-free network architecture is the requirement of joint processing of signals related to all active users which will require the CSI for the links from all users to all APs. This becomes very challenging for massive numbers of active users and APs in the network. In the following section, we propose a dynamic cell-free network model for uplink access that reduces the complexity of joint processing, compared to that of static cell-free network model, in order to serve a massive number of users. 

%=============================================================
\section{Dynamic Cell-Free Architecture}
%=============================================================

%In \cite{GCoMP}, we designed a novel {\em downlink} cell-free network that partitions the users and APs based on the CSI into groups of clusters with each cluster containing a set of users and serving APs that act as a mini cell-free network with centralized processing. 
%The size and content of each cluster is specified by the current CSI of the overall network. 
%The proposed system showed a significant complexity reduction at the expense of slight degradation in system performance.

%==========================================================================================================================

\subsection{Network Model and Assumptions}\label{s_sys}
In the proposed dynamic cell-free network model (Fig.~\ref{System_Model_2}), the  APs in the network are partitioned among a set of $\tilde{M}$ subgroups (known as clusters) in which every cluster consists of $\mathcal{N}_{\tilde{m}}$ APs such that $1\leq \mathcal{N}_{\tilde{m}}\leq M-(\tilde{M}-1)$ \footnote{Note that when $\tilde{M}=M$, every cluster will contain a single AP and this falls back to the conventional static cell-free network model.}.
Additionally, the $\tilde{m}$-th cluster, where $\tilde{m}=1, \dots, \tilde{M}$,  acts as an single virtual AP with a $\mathcal{N}_{\tilde{m}}$-antenna DAS that combine the overall received signals using a predefined combining technique (such as maximal ratio combining [MRC], equal gain combining [EGC], selection combining [SC], etc.)\cite{simonmk05}.
The clustering of APs is assumed to be performed in each time slot (or transmission interval) based on the current CSI of the network.
%This can be achieved by optimizing the APs' clustering set, the beamforming vectors and the DAP combining mechanism.   
Similar to the static cell-free network model, we assume that channel gain between the $k$-th user and the ${n}_{\tilde{m}}$-th antenna at the $\tilde{m}$-th cluster (the virtual AP) follows the following distribution:
\begin{equation}
    g_{\tilde{m}n_{\tilde{m}}k}={L}_{\tilde{m}n_{\tilde{m}}k}^{-\kappa}h_{\tilde{m}n_{\tilde{m}}k},
\end{equation}
where $L_{\tilde{m}n_{\tilde{m}}k}=||d_{\tilde{m}n_{\tilde{m}}k}||$ is the Euclidean distance between the $k\text{-th}$ user and the $n_{\tilde{m}}$-th antenna at the $\tilde{m}$-th cluster, $\kappa$ is the path-loss exponent with a value depends on the propagation environment with $\kappa\geq 2$ and $h_{\tilde{m}n_{\tilde{m}}k}$ is the small-scale channel fading between the $k\text{-th}$ user and the $n_{\tilde{m}}$-th antenna at the $\tilde{m}$-th cluster. 
Similarly,  $h_{\tilde{m}n_{\tilde{m}}k}$ is assumed to follow a Nakagami-$\mathcal{M}_{\tilde{m}n_{\tilde{m}}k}$ distribution, i.e. $|h_{\tilde{m}n_{\tilde{m}}k}|^2\thicksim \mathcal{G}(\alpha_{\tilde{m}n_{\tilde{m}}k},\beta_{\tilde{m}n_{\tilde{m}}k})$, where $\alpha_{\tilde{m}n_{\tilde{m}}k}=\mathcal{M}_{\tilde{m}n_{\tilde{m}}k}$ represents the shape parameter of Nakagami-$\mathcal{M}_{\tilde{m}n_{\tilde{m}}k}$ fading and $\beta_{\tilde{m}n_{\tilde{m}}k}=\frac{\mathcal{M}_{mnk}}{\Omega_{\tilde{m}n_{\tilde{m}}k}}$ is defined as the shape parameter normalized by spreading parameter of Nakagami-$\mathcal{M}_{\tilde{m}n_{\tilde{m}}k}$ fading. 
Again, for simplicity of analysis, we assume that $L_{\tilde{m}n_{\tilde{m}}k}^{-\kappa}, \forall~\tilde{m},n_{\tilde{m}},k$ is known. Accordingly, we have $|g_{\tilde{m}n_{\tilde{m}}k}|^2\thicksim \mathcal{G}(\alpha_{\tilde{m}n_{\tilde{m}}k},\beta_{\tilde{m}n_{\tilde{m}}k}')$, where $\beta_{\tilde{m}n_{\tilde{m}}k}'=\beta_{\tilde{m}n_{\tilde{m}}k}/L_{\tilde{m}n_{\tilde{m}}k}^{-2\kappa}$. 
Let us denote $\bm{\mathcal{C}}=\{\mathcal{C}_1 \dots \mathcal{C}_N \}$ as the set of all possible clustering configurations of APs such that every cluster contains at least one  AP. 
As an example, with $M=8$ and $\tilde{M}=3$, one possible set is
\[
\mathcal{C}_n=\{ \underbrace{ \{ AP_3, AP_6, AP_8 \}}_{\mathcal{N}_{\tilde{m}=1}=3}, \dots, \underbrace{\{AP_2\}}_{\mathcal{N}_{\tilde{m}=2}=1},\underbrace{\{ \overbrace{AP_1}^{n_{\tilde{m}=3}=1}, AP_4, AP_5, AP_7 \}}_{\mathcal{N}_{\tilde{m}=\tilde{M}=3}=4}  \}.
\]
%=========================================================================

\begin{figure}[htb]
		\centering
		\includegraphics[height=6cm, width=6.5cm]{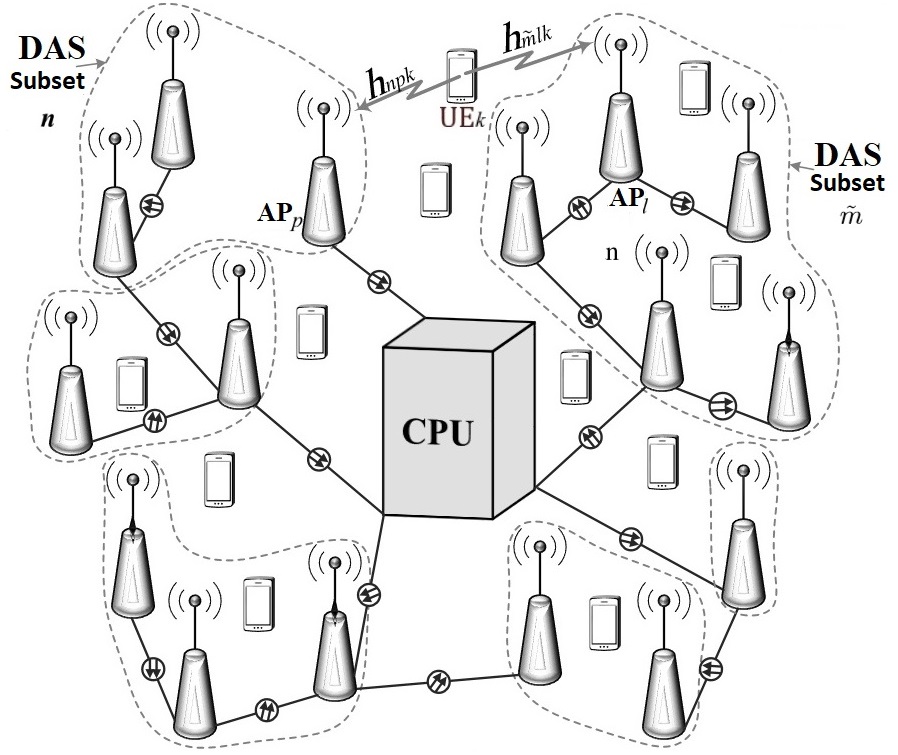}
		%\caption{Example scenario:(a) Static, (b) Dynamic. }\label{System_Model}
		\caption{Dynamic cell-free network model.}\label{System_Model_2}
	\end{figure}
Note that in this network setup, the CSI and the instantaneous clustering information will be only required at the CPU for decoding the signals from the UEs. 

%========================================================
\subsection{Outage Performance}
%========================================================================
For the proposed dynamic cell-free model, the received signal used in the detection of the $k$-th user's component is given by  
\begin{dmath}
\scriptsize
y_k=
\sum_{\tilde{m}=1}^{\tilde{M}}w_{\tilde{m}k}
\sum_{n_{\tilde{m}}=1}^{\mathcal{N}_{\tilde{m}}}G_{\tilde{m}n_{\tilde{m}}k}
\left[\sum_{l=1}^K\hat{g}_{\tilde{m}n_{\tilde{m}}l}\sqrt{p_l}x_l+{\tilde{\eta}}_{\tilde{m}n_{\tilde{m}}}\right]
  =
    \underbrace{
    \sqrt{\tau_p\rho_kp_k}x_k\sum_{\tilde{m}=1}^{\tilde{M}}w_{\tilde{m}k}\mathcal{E}_{\tilde{m}n_{\tilde{m}}k}
    \sum_{n_{n_{\tilde{m}}}}^{\mathcal{N}_{n_{\tilde{m}}}}G_{{\tilde{m}}n_{\tilde{m}}k}
    g_{\tilde{m}n_{\tilde{m}}k}}
    _{\text{Desired Signal}}+
    \sum_{\tilde{m}=1}^{\tilde{M}}
    w_{\tilde{m}k}
    \sum_{n_{{\tilde{m}}}=1}^{\mathcal{N}_{\tilde{m}}}G_{{\tilde{m}}n_{\tilde{m}}k}
     \left[
     \underbrace{    \sum_{l=1, l\neq k}^K\sqrt{\tau_p\rho_lp_l}x_l\mathcal{E}_{\tilde{m}n_{\tilde{m}}l}g_{\tilde{m}n_{\tilde{m}}l}
    }
    _{\text{Inter-User Interference}}
    \\
    +
    \underbrace{
    \sum_{l=1, l\neq k}^K\sqrt{\tau_p\rho_kp_k}x_k\mathcal{E}_{\tilde{m}n_{\tilde{m}}k}|\bm{\varphi}_{\tilde{m}n_{\tilde{m}}k}^H\bm{\varphi}_{\tilde{m}n_{\tilde{m}}l}|g_{\tilde{m}n_{\tilde{m}}l}
    +
    \sum_{l=1, l\neq k}^K \sum_{\tilde{l}=1, \tilde{l}\neq l}^K\sqrt{\tau_p\rho_lp_l}x_l\mathcal{E}_{\tilde{m}n_{\tilde{m}}l}|\bm{\varphi}_{\tilde{m}n_{\tilde{m}}l}^H\bm{\varphi}_{\tilde{m}n_{\tilde{m}}{\tilde{l}}}|g_{\tilde{m}n_{\tilde{m}}\tilde{l}}}
    _{\text{Non-Orthogonal Pilots' Related Estimation error}}
        \right]
    \\
    +
    \sum_{\tilde{m}=1}^{\tilde{M}}
    w_{\tilde{m}k}
    \sum_{n_{\tilde{m}}=1}^{\mathcal{N}_{\tilde{m}}}G_{{\tilde{m}}n_{\tilde{m}}k}
    \left[
          \underbrace{ \sqrt{p_k}x_k|\bm{\varphi}_{\tilde{m}n_{\tilde{m}}k}^H\bm{\eta}_{\tilde{m}n_{\tilde{m}}}|
    +
    \sum_{l=1, l\neq k}^K\sqrt{p_l}x_l|\bm{\varphi}_{\tilde{m}n_{\tilde{m}}l}^H\bm{\eta}_{\tilde{m}n_{\tilde{m}}}|
    }
    _{\text{AWGN's Related Estimation Error}}
    +
              \underbrace{  \tilde{\eta}_{\tilde{m}n_{\tilde{m}}}}
    _{\text{AWGN}}
    \right], \label{Received_Signal_2} 
\end{dmath}
where $w_{\tilde{m}k}$ is the $\tilde{m}$-th element of the $k$-th user's beamforming vector such that $0\leq w_{{\tilde{m}}k}\leq 1$, 
$G_{\tilde{m}n_{\tilde{m}}k}$ is a design gain parameter for the $\mathcal{N}_{\tilde{m}}$-antenna DAS at the $\tilde{m}$-th virtual AP,
$p_{k}$ is the uplink transmission power of the $k$-th user such that $0\leq p_k\leq P_k$, where $P_k$ is the power budget of the $k\text{-th}$ user, $x_k$ is the transmitted symbol from the $k\text{-th}$ user such that $\mathbb{E}[|x_k|^2]=1$, and $\tilde{\eta}_{\tilde{m}n_{\tilde{m}}}$ is the AWGN at the $m$-th AP with $\tilde{\eta}_{\tilde{m}n_{\tilde{m}}}\thicksim \mathcal{CG}\left(\tilde{\mu}_{\tilde{m}n_{\tilde{m}}}, \tilde{\sigma}_{\tilde{m}n_{\tilde{m}}}^2/2\right)$.
We assume that $\tilde{\eta}_{\tilde{m}n_{\tilde{m}}}, \forall \{\tilde{m}~ \&~ n_{\tilde{m}}\}$ are from a set of i.n.d rvs. 

For uplink network training, we follow a procedure similar to that in Section II.B. 
Therefore, we have ${\bm{\varphi}}_{\tilde{m}n_{\tilde{m}}k}=\left[\varphi_{{\tilde{m}n_{\tilde{m}}k}, 1} \dots \varphi_{{\tilde{m}n_{\tilde{m}}k}, \tau_p} \right]^H$  as the set of $\tau_p$-dimensional training symbols such that $||{\bm{\varphi}}_{\tilde{m}n_{\tilde{m}}k}||^2=1$ and $\mathcal{E}_{\tilde{m}n_{\tilde{m}}k}=\frac{\sqrt{\tau_p \rho_k}L_{\tilde{m}n_{\tilde{m}}k}^{-\kappa}}{\rho_c\sum_{l=1, l\neq k}^{K}\rho_kL^{-\kappa}_{\tilde{m}n_{\tilde{m}}l}|\bm{\varphi}_{\tilde{m}n_{\tilde{m}}k}^H\bm{\varphi}_{\tilde{m}n_{\tilde{m}}l}|^2+1}$ is the MMSE channel estimation constant. 
Note that for every cluster set ($\mathcal{C}_n, n=1, \dots, N$), different SINR values ($\gamma_k$) will result  for different users.

%In a similar procedure to the case of  fully centralized cell-free networks, 

The SINR $\gamma_k$ of the $k$-th user related to $y_k$ in (\ref{Received_Signal_2}) and under the $n$-th cluster can be expressed as   
\begin{dmath}
   \text{\hspace{-3mm}} \gamma_k^{\{\mathcal{C}_n\}}=
    \frac
    {
    \sum_{\tilde{m}=1}^{\tilde{M}}
    \sum_{n_{\bar{m}}=1}^{\mathcal{N}_{\bar{m}}}
    |\tilde{g}_{\tilde{m}n_{\tilde{m}}k}|^2
    }
    {
    \sum_{\tilde{m}=1}^{\tilde{M}}
    \sum_{n_{\tilde{m}}=1}^{\mathcal{N}_{\tilde{m}}}
    \left[
    \sum_{l=1, l\neq k}^K|\tilde{g}_{\tilde{m}n_{\tilde{m}}l}|^2
    +
    \sum_{\dot{l}=1, \dot{l}\neq k}^K
    \sum_{\ddot{l}=1, \ddot{l}\neq \dot{l}}^K
    |\tilde{g}_{{\tilde{m}n_{\tilde{m}}}{\ddot{l}}}|^2
    +
    \sum_{\Breve{l}=1, \Breve{l}\neq k}^K|\tilde{g}_{{\tilde{m}n_{\tilde{m}}}{\Breve{l}}}|^2
    \right]+1
    } 
,\label{SINR_2}
\end{dmath}
 where
    $ |\tilde{g}_{\tilde{m}n_{\tilde{m}}k}|^2 \thicksim \gd {\tilde{\alpha}_{\tilde{m}n_{\tilde{m}}k}} {\tilde{\beta}_{\tilde{m}n_{\tilde{m}}k}} $
    , 
    $ |\tilde{g}_{\tilde{m}n_{\tilde{m}}l}|^2 \thicksim \gd {\tilde{\alpha}_{\tilde{m}n_{\tilde{m}}l}} {\tilde{\beta}_{\tilde{m}n_{\tilde{m}}l}} $
    , 
    $ |\tilde{g}_{{\tilde{m}n_{\tilde{m}}}\ddot{l}}|^2 \thicksim \gd {\tilde{\alpha}_{{\tilde{m}n_{\tilde{m}}}\ddot{l}}} {\tilde{\beta}_{{\tilde{m}n_{\tilde{m}}}\ddot{l}}} $ 
    and\\ 
    $ |\tilde{g}_{{\tilde{m}n_{\tilde{m}}}\Breve{l}}|^2 \thicksim \gd {\tilde{\alpha}_{{\tilde{m}n_{\tilde{m}}}\Breve{l}}} {\tilde{\beta}_{{\tilde{m}n_{\tilde{m}}}\Breve{l}}} $ 
 with
 $\tilde{\alpha}_{\tilde{m}n_{\tilde{m}}k}=\mathcal{M}_{\tilde{m}n_{\tilde{m}}k}$, 
 
 $\tilde{\beta}_{\tilde{m}n_{\tilde{m}}k}=\frac{\mathcal{M}_{\tilde{m}n_{\tilde{m}}k}\dot{\sigma}_{\tilde{m}n_{\tilde{m}}k}L_{\tilde{m}n_{\tilde{m}}k}^{2\kappa}}{\Omega_{\tilde{m}n_{\tilde{m}}k}    w_{mk}^2G_{\tilde{m}n_{\tilde{m}}k}\tau_p\rho_kp_k\mathcal{E}_{\tilde{m}n_{\tilde{m}}k}^2}$, 
 $\tilde{\alpha}_{\tilde{m}n_{\tilde{m}}l}=\mathcal{M}_{\tilde{m}n_{\tilde{m}}l}$,  
 
 $\tilde{\beta}_{\tilde{m}n_{\tilde{m}}l}=\frac{\mathcal{M}_{\tilde{m}n_{\tilde{m}}l}\dot{\sigma}_{\tilde{m}n_{\tilde{m}}l}L_{\tilde{m}n_{\tilde{m}}l}^{2\kappa} }{\Omega_{\tilde{m}n_{\tilde{m}}l}  w_{mk}^2G_{\tilde{m}n_{\tilde{m}}k}\tau_p\rho_kp_k\mathcal{E}_{\tilde{m}n_{\tilde{m}}k}^2|\bm{\varphi}_{\tilde{m}n_{\tilde{m}}k}^H\bm{\varphi}_{\tilde{m}n_{\tilde{m}}l}|^2}$, 
 $\tilde{\alpha}_{{\tilde{m}n_{\tilde{m}}}\ddot{l}}=\mathcal{M}_{{\tilde{m}n_{\tilde{m}}}\ddot{l}}$, 
 
 $\tilde{\beta}_{{\tilde{m}n_{\tilde{m}}}\ddot{l}}=\frac{\mathcal{M}_{{\tilde{m}n_{\tilde{m}}}\ddot{l}}\dot{\sigma}_{{\tilde{m}n_{\tilde{m}}}k}L_{{{\tilde{m}n_{\tilde{m}}}}\ddot{l}}^{2\kappa}}{\Omega_{{{\tilde{m}n_{\tilde{m}}}}\ddot{l}}   w_{mk}^2G_{\tilde{m}n_{\tilde{m}}k}\tau_p\rho_{\dot{l}}p_{\dot{l}}\mathcal{E}_{{{\tilde{m}n_{\tilde{m}}}}{\dot{l}}}^2|\bm{\varphi}_{{{\tilde{m}n_{\tilde{m}}}}\dot{l}}^H\bm{\varphi}_{{{\tilde{m}n_{\tilde{m}}}}\ddot{l}}|^2}$,
 
 $\tilde{\alpha}_{{{\tilde{m}n_{\tilde{m}}}}\Breve{l}}=\mathcal{M}_{{{\tilde{m}n_{\tilde{m}}}}\Breve{l}}$,  
 
 $\tilde{\beta}_{{{\tilde{m}n_{\tilde{m}}}}\Breve{l}}=\frac{\mathcal{M}_{{{\tilde{m}n_{\tilde{m}}}}\Breve{l}}\dot{\sigma}_{{\tilde{m}n_{\tilde{m}}}k}L_{{{\tilde{m}n_{\tilde{m}}}}\Breve{l}}^{2\kappa}}{\Omega_{{{\tilde{m}n_{\tilde{m}}}}\Breve{l}}   w_{mk}^2\tau_p\rho_{\Breve{l}}p_{\dot{l}}\mathcal{E}_{{{\tilde{m}n_{\tilde{m}}}}{\dot{l}}}^2}$, and
 \\
 $\dot{\sigma}_{\tilde{m}n_{\tilde{m}}k}=\sum_{\tilde{m}=1}^{\tilde{M}}
 w_{\tilde{m}k}^2
 \sum_{n_{\tilde{m}}=1}^{\mathcal{N}_{\tilde{m}}}G_{\tilde{m}n_{\tilde{m}}k}^2
 \left[\sum_{t=1}^{\tau_p}\frac{\sigma_{\tilde{m}n_{\tilde{m}}}^2}{2}\left(p_k\varphi_{{\tilde{m}n_{\tilde{m}}k},t}+\sum_{l=1, l\neq k}^Kp_l\varphi_{{\tilde{m}n_{\tilde{m}}l},t}\right)+ \tilde{\sigma}^2_{\tilde{m}n_{\tilde{m}}}/2\right]$.\\
Note that $G_{\tilde{m}nk}$ is a one-time design parameter while $w_{\tilde{m}k}$ is an optimization parameter.
Generally, different performance metrics of the dynamic cell-free network will depend on the clustering algorithm that assigns different APs to different clusters.
\textbf{Corollary \ref{Dynamic_Performance}} below states the outage performance of a dynamic cell-free network under a random clustering scheme.
%at which APs are assigned randomly among a random number of clusters. 
\begin{corollary} \label{Dynamic_Performance}
     If 
    $ |\tilde{g}_{\tilde{m}n_{\tilde{m}}k}|^2 \thicksim \gd {\tilde{\alpha}_{\tilde{m}n_{\tilde{m}}k}} {\tilde{\beta}_{\tilde{m}n_{\tilde{m}}k}} $
    , ~
    $ |\tilde{g}_{\tilde{m}n_{\tilde{m}}l}|^2 \thicksim \gd {\tilde{\alpha}_{\tilde{m}n_{\tilde{m}}l}} {\tilde{\beta}_{\tilde{m}n_{\tilde{m}}l}} $
    , ~
    $ |\tilde{g}_{{\tilde{m}n_{\tilde{m}}}\ddot{l}}|^2 \thicksim \gd {\tilde{\alpha}_{{\tilde{m}n_{\tilde{m}}}\ddot{l}}} {\tilde{\beta}_{{\tilde{m}n_{\tilde{m}}}\ddot{l}}} $ 
    and 
    $ |\tilde{g}_{{\tilde{m}n_{\tilde{m}}}\Breve{l}}|^2 \thicksim \gd {\tilde{\alpha}_{{\tilde{m}n_{\tilde{m}}}\Breve{l}}} {\tilde{\beta}_{{\tilde{m}n_{\tilde{m}}}\Breve{l}}} $
    where 
    $|\tilde{g}_{\tilde{m}n_{\tilde{m}}k}|^2$
    ,
    $|\tilde{g}_{\tilde{m}n_{\tilde{m}}l}|^2$
    ,
    $|\tilde{g}_{{\tilde{m}n_{\tilde{m}}}\ddot{l}}|^2$
    and
    $|\tilde{g}_{{\tilde{m}n_{\tilde{m}}}\Breve{l}}|^2$ are independent rvs with ${\tilde{m}}=1, \dots, \tilde{M}$, $n_{\tilde{m}}=1, \dots, \mathcal{N}_{\tilde{m}}$ such that $1\leq \mathcal{N}_{\tilde{m}}\leq M-(\tilde{M}-1)$ and $\{k, l, \ddot{l}, \Breve{l}\}=1, \dots, K$ such that $k\neq l\neq \ddot{l}\neq \Breve{l}$. The  probability of outage $P_{\text{out}}^{(k)}$ of the $k$-th user in the proposed dynamic cell-fee network under random AP clustering is identical to that given in \textbf{Theorem 1} with 
    \[
    \Dot{\alpha}_{\tilde{m}n_{\tilde{m}}k}=
    \frac{
    \left(
    \sum_{\tilde{m}=1}^{\tilde{M}}
    \sum_{n_m}^{\mathcal{N}_{\tilde{m}}}
    \tilde{\alpha}_{\tilde{m}n_{\tilde{m}}k}/ \tilde{\beta}_{\tilde{m}n_{\tilde{m}}k}
    \right)^2
    }
    {\sum_{\tilde{m}=1}^{\tilde{M}}
    \sum_{n_{\tilde{m}}=1}^{\mathcal{N}_{\tilde{m}}}
    \tilde{\alpha}_{\tilde{m}n_{\tilde{m}}k}/\tilde{\beta}_{\tilde{m}n_{\tilde{m}}k}^2
    },~~~~~~
    \Dot{\beta}_{\tilde{m}n_{\tilde{m}}k}=\frac{\sum_{\tilde{m}=1}^{\tilde{M}}
\sum_{n_{\tilde{m}}=1}^{\mathcal{N}_{\tilde{m}}}
\tilde{\alpha}_{\tilde{m}n_{\tilde{m}}k}/\tilde{\beta}_{\tilde{m}n_{\tilde{m}}k}^2}{\sum_{\tilde{m}=1}^{\tilde{M}}
\sum_{n_{\tilde{m}}=1}^{\mathcal{N}_{\tilde{m}}}
\tilde{\alpha}_{\tilde{m}n_{\tilde{m}}k}/ \tilde{\beta}_{\tilde{m}n_{\tilde{m}}k}},
\]
\[
\Dot{\alpha}_{\tilde{m}n_{\tilde{m}}k'}=\frac{
\left(
\sum_{\tilde{m}=1}^{\tilde{M}}
\sum_{n_{\tilde{m}}=1}^{\mathcal{N}_{\tilde{m}}}
    \left[
    \sum_{l=1, l\neq k}^K
    \frac{    \tilde{\alpha}_{\tilde{m}n_{\tilde{m}}l}}{ \tilde{\beta}_{\tilde{m}n_{\tilde{m}}l}}
    +
    \sum_{\dot{l}=1, \dot{l}\neq k}^K
    \sum_{\ddot{l}=1, \ddot{l}\neq \dot{l}}^K
    \frac{\tilde{\alpha}_{{\tilde{m}n_{\tilde{m}}}\ddot{l}}}{     \tilde{\beta}_{{\tilde{m}n_{\tilde{m}}}\ddot{l}}}
    +
    \sum_{\Breve{l}=1, \Breve{l}\neq k}^K
    \frac{\tilde{\alpha}_{{\tilde{m}n_{\tilde{m}}}\Breve{l}}}{        \tilde{\beta}_{{\tilde{m}n_{\tilde{m}}}\Breve{l}}}
    \right]
\right)^2
}
{
\sum_{\tilde{m}=1}^{\tilde{M}}
    \sum_{n_{\tilde{m}}=1}^{\mathcal{N}_{\tilde{m}}}
    \left[
    \sum_{l=1, l\neq k}^K
    \frac{\tilde{\alpha}_{\tilde{m}n_{\tilde{m}}l}}{    \tilde{\beta}_{\tilde{m}n_{\tilde{m}}l}^2}
    +
    \sum_{\dot{l}=1, \dot{l}\neq k}^K
    \sum_{\ddot{l}=1, \ddot{l}\neq \dot{l}}^K
    \frac
    {
    \tilde{\alpha}_{{\tilde{m}n_{\tilde{m}}}\ddot{l}}
    }
    {
    \tilde{\beta}_{{\tilde{m}n_{\tilde{m}}}\ddot{l}}^2
    }
    +
    \sum_{\Breve{l}=1, \Breve{l}\neq k}^K
    \frac{
    \tilde{\alpha}_{{\tilde{m}n_{\tilde{m}}}\Breve{l}}
    }
    {
    \tilde{\beta}_{{\tilde{m}n_{\tilde{m}}}\Breve{l}}^2
    }
    \right]
},
\]
\[
\Dot{\beta}_{\tilde{m}n_{\tilde{m}}k'}=
\frac{
\sum_{\tilde{m}=1}^{\tilde{M}}
\sum_{n_{\tilde{m}}=1}^{\mathcal{N}_{\tilde{m}}}
    \left[
    \sum_{l=1, l\neq k}^K
    \frac{\tilde{\alpha}_{\tilde{m}n_{\tilde{m}}l}}{    \tilde{\beta}_{\tilde{m}n_{\tilde{m}}l}^2}
    +
    \sum_{\dot{l}=1, \dot{l}\neq k}^K
    \sum_{\ddot{l}=1, \ddot{l}\neq \dot{l}}^K
    \frac{\tilde{\alpha}_{{\tilde{m}n_{\tilde{m}}}\ddot{l}}}{
    \tilde{\beta}_{m\ddot{l}}^2}
    +
    \sum_{\Breve{l}=1, \Breve{l}\neq k}^K
    \frac{    \tilde{\alpha}_{{\tilde{m}n_{\tilde{m}}}\Breve{l}}}{\tilde{\beta}_{{\tilde{m}n_{\tilde{m}}}\Breve{l}}^2}
    \right]
}
{
 \sum_{\tilde{m}=1}^{\tilde{M}}
    \sum_{n_{\tilde{m}}=1}^{\mathcal{N}_{\tilde{m}}}
    \left[
    \sum_{l=1, l\neq k}^K
    \frac{\tilde{\alpha}_{\tilde{m}n_{\tilde{m}}l}}{     \tilde{\beta}_{\tilde{m}n_{\tilde{m}}l}}
    +
    \sum_{\dot{l}=1, \dot{l}\neq k}^K
    \sum_{\ddot{l}=1, \ddot{l}\neq \dot{l}}^K
    \frac{\tilde{\alpha}_{{\tilde{m}n_{\tilde{m}}}\ddot{l}
}}{ \tilde{\beta}_{{\tilde{m}n_{\tilde{m}}}\ddot{l}}}
    +
    \sum_{\Breve{l}=1, \Breve{l}\neq k}^K
    \frac{\tilde{\alpha}_{{\tilde{m}n_{\tilde{m}}}\Breve{l}
}}{\tilde{\beta}_{{\tilde{m}n_{\tilde{m}}}\Breve{l}}}
    \right]
}
.
\]
	\end{corollary}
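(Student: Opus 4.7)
The plan is to show that the SINR $\gamma_k^{\{\mathcal{C}_n\}}$ in (\ref{SINR_2}) has exactly the same distributional structure as $\gamma_k$ in (\ref{SINR_1}), so that \textbf{Theorem~\ref{Theorem_Outage}} applies verbatim once the Welch--Satterthwaite parameters are recomputed over the double sum indexed by clusters and in-cluster antennas. Concretely, (\ref{SINR_2}) is a ratio whose numerator is a sum of independent Gamma rvs $|\tilde{g}_{\tilde{m}n_{\tilde{m}}k}|^2 \sim \mathcal{G}(\tilde{\alpha}_{\tilde{m}n_{\tilde{m}}k},\tilde{\beta}_{\tilde{m}n_{\tilde{m}}k})$ over $(\tilde{m},n_{\tilde{m}})$, while its denominator is a sum of independent Gamma rvs (the IUI, the cross pilot-correlation, and the AWGN-estimation-error terms) plus one. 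This is structurally identical to (\ref{SINR_1}); the only change is that the index $m\in\{1,\dots,M\}$ is replaced by the pair $(\tilde{m},n_{\tilde{m}})$ with $\tilde{m}\in\{1,\dots,\tilde{M}\}$ and $n_{\tilde{m}}\in\{1,\dots,\mathcal{N}_{\tilde{m}}\}$.

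First, I would invoke the random-clustering hypothesis to argue that the partition $\mathcal{C}_n$ is drawn independently of the channel realizations, so that, conditional on the partition, the rvs $\{|\tilde{g}_{\tilde{m}n_{\tilde{m}}k}|^2\}_{\tilde{m},n_{\tilde{m}}}$ (and analogously the interference rvs) remain a family of independent Gamma variates with the parameters listed right after (\ref{SINR_2}). Then, exactly as in the proof of \textbf{Lemma~\ref{l_pdfs}} (Appendix A), apply the Welch--Satterthwaite moment-matching: a sum of independent Gamma rvs $X_i \sim \mathcal{G}(\alpha_i,\beta_i)$ is approximated by a single Gamma rv whose shape and rate are obtained by equating the first two moments, giving $\dot{\alpha}=(\sum_i \alpha_i/\beta_i)^2/\sum_i \alpha_i/\beta_i^2$ and $\dot{\beta}=\sum_i \alpha_i/\beta_i^2 / \sum_i \alpha_i/\beta_i$. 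Applied to the numerator of (\ref{SINR_2}) with the index set $(\tilde{m},n_{\tilde{m}})$, these formulas give exactly the $\Dot{\alpha}_{\tilde{m}n_{\tilde{m}}k}$ and $\Dot{\beta}_{\tilde{m}n_{\tilde{m}}k}$ stated in the corollary.

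Next, I would perform the same moment-matching for the aggregate interference in the denominator, where the summation is taken jointly over $(\tilde{m},n_{\tilde{m}})$ and over the three interference contributions indexed by $l$, $\ddot{l}$ and $\Breve{l}$. Collecting $\sum_i \alpha_i/\beta_i$ and $\sum_i \alpha_i/\beta_i^2$ over these compound indices yields precisely the expressions for $\Dot{\alpha}_{\tilde{m}n_{\tilde{m}}k'}$ and $\Dot{\beta}_{\tilde{m}n_{\tilde{m}}k'}$ in the corollary statement. At this point the SINR is an approximated ratio of two independent Gamma rvs plus one in the denominator, which is the same canonical form analyzed in \textbf{Theorem~\ref{Theorem_Outage}}.

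Finally, I would conclude by directly substituting the new Welch--Satterthwaite parameters into (\ref{e_outage_1}): since the derivation of $P_{\text{out}}^{(k)}$ in Appendix B depends on the SINR only through the two Gamma shape/rate pairs that describe the aggregated numerator and denominator, and those pairs have just been recomputed, the closed-form (\ref{e_outage_1}) carries over with no further modification. The main obstacle, and the step that requires the most care, will be the bookkeeping in the denominator moment matching: one must make sure that the three distinct interference families (IUI, pilot-contamination cross terms, and AWGN-induced estimation error) are combined consistently under a single Welch--Satterthwaite approximation, and that the independence required by the approximation is preserved after random clustering. Once that verification is done, the result follows as a direct corollary of \textbf{Theorem~\ref{Theorem_Outage}}.
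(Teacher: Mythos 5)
Your proposal is correct and follows essentially the same route as the paper: the paper's own proof is a one-line observation that the SINR expressions (\ref{SINR_1}) and (\ref{SINR_2}) differ only in the indexing/scaling of the underlying Gamma rvs, so \textbf{Theorem~\ref{Theorem_Outage}} carries over with the Welch--Satterthwaite parameters recomputed over the double index $(\tilde{m},n_{\tilde{m}})$. You simply spell out explicitly the moment-matching and independence bookkeeping that the paper leaves implicit.
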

    \begin{proof}
This can be easily concluded due to the fact that equations (\ref{SINR_1}) and (\ref{SINR_2}) are different from each other only by the scaling parameters of the rvs.
\end{proof}
Note that using the AP clustering scheme with DAS at every virtual AP, there will be no requirements for CSI from other virtual APs. Additionally, a significant reduction of the lengths of the beamforming vectors will be achieved with a direct positive effect on processing speed and complexity  for the beamforming optimization.
It is important to mention that the dynamic cell-free network model is a generalization of the static cell-free network model. 
This can be inferred by setting $\tilde{M}=M$ which will result in $\mathcal{N}_{\tilde{m}}=1,~\forall \tilde{m}=1, \dots, \tilde{M}(\equiv M)$. Here,  $G_{\tilde{m}n_{\tilde{m}k}}$ will be merged with $w_{\tilde{m}n_{\tilde{m}}k}(\equiv w_{mk})$.  
%===========================================================================================
\section{Clustering Technique and Beamforming Design}
This section formulates the  general problem of jointly performing the clustering of APs and designing the beamforming vectors.
However, before presenting the problem formulation, we describe the SIC technique for  detection of the signals received by the CPU and also the diversity technique to combine the signals from multiple antennas. SIC can be achieved by utilizing the fact that the signals from all users are superimposed in the power domain, and with the proper beamforming design, the signal related to every user can have a distinct power level. 
%=======================================================================
\subsection{SIC-Enabled Signal Detection}
Note that in a cell-free network, signals from different users use the same time-frequency resource. The main idea of power-domain SIC technique is that a proper power control is performed such that signals from different users have a distinct power levels\footnote{This is similar to traditional power-domain non-orthogonal multiple access (NOMA), where signals from different users are multiplexed in the same time-frequency resource and SIC techniques are used at the receivers.}. 
Additionally, to detect signal from a certain user, the SIC unit first detects signals of users with higher power levels, subtracts their contributions from the overall received signal, and then detects the intended signal.
Specifically, for the proposed dynamic cell-free network, when detecting the signal from the $k$-th user, the uplink beamforming vectors are designed such that $p_1\sum_{\tilde{m}=1}^{\tilde{M}}~
    \sum_{n_{\tilde{m}}=1}^{\mathcal{N}_{\tilde{m}}}G_{\tilde{m}n_{\tilde{m}}k}^2
    |\hat{g}_{\tilde{m}n_{\tilde{m}}1}|^2 
    \leq \dots \leq
    p_k\sum_{\tilde{m}=1}^{\tilde{M}} 
    \sum_{n_{\tilde{m}}=1}^{\mathcal{N}_{\tilde{m}}}G_{\tilde{m}n_{\tilde{m}}k}^2
    |\hat{g}_{\tilde{m}n_{\tilde{m}}k}|^2~~
    \leq
    \dots
    \leq
p_K\sum_{\tilde{m}=1}^{\tilde{M}} 
    \sum_{n_{\tilde{m}}=1}^{\mathcal{N}_{\tilde{m}}}G_{\tilde{m}n_{\tilde{m}}k}^2
    |\hat{g}_{\tilde{m}n_{\tilde{m}}K}|^2$.
    
Accordingly, the SINR $\gamma_k$ of the $k$-th user can be expressed as
\begin{dmath}
    \gamma_k^{\{\mathcal{C}_n\}}=
%    \frac
%    {
%    \sum_{\tilde{m}=1}^{\tilde{M}}w_{\tilde{m}k}^2
%   \sum_{n_{\tilde{m}}=1}^{\mathcal{N}_{\tilde{m}}}
%   p_kG_{\tilde{m}n_{\tilde{m}}k}^2
%    |\hat{g}_{\tilde{m}n_{\tilde{m}}k}|^2
%    }
%    {
%    \sum_{\tilde{m}=1}^{\tilde{M}}w_{\tilde{\tilde{m}}k}^2
%    \sum_{n_{\tilde{m}}=1}^{\mathcal{N}_{\tilde{m}}}|G_{\tilde{m}n_{\tilde{m}}k}|^2
%   \sum_{l=1, l\neq k}^K p_l |\hat{g}_{\tilde{m}n_{\tilde{m}}l}|^2
%    +
%   \sum_{\tilde{m}=1}^{\tilde{M}}
%w_{\tilde{m}k}^2
% \sum_{n_{\tilde{m}}=1}^{\mathcal{N}_{\tilde{m}}}G_{\tilde{m}n_{\tilde{m}}k}^2
%  \tilde{\sigma}^2_{\tilde{m}n_{\tilde{m}}}/2
%    }
%    \\
%    =
    \frac
    {
    \sum_{\tilde{m}=1}^{\tilde{M}}
    w_{\tilde{m}k}^2
    \sum_{n_{\bar{m}}=1}^{\mathcal{N}_{\bar{m}}}
    | \check{g}_{\tilde{m}n_{\tilde{m}}k}|^2
    }
    {
    \sum_{\tilde{m}=1}^{\tilde{M}}
    w_{\tilde{m}k}^2
    \sum_{n_{\tilde{m}}=1}^{\mathcal{N}_{\tilde{m}}}
    \left[
    \sum_{l=1, l\neq k}^K|\check{g}_{\tilde{m}n_{\tilde{m}}l}|^2
    +
    \sum_{\dot{l}=1, \dot{l}\neq k}^K
    \sum_{\ddot{l}=1, \ddot{l}\neq \dot{l}}^K
    |\check{g}_{{\tilde{m}n_{\tilde{m}}}{\ddot{l}}}|^2
    +
    \sum_{\Breve{l}=1}^{k-1}|\check{g}_{{\tilde{m}n_{\tilde{m}}}{\Breve{l}}}|^2
    \right]+1
    } 
,\label{SINR_3}
\end{dmath}
where $|\Breve{g}_{\tilde{m}n_{\tilde{m}}k}|^2=\frac{
    G_{\tilde{m}n_{\tilde{m}}k}\tau_p\rho_kp_k\mathcal{E}_{\tilde{m}n_{\tilde{m}}k}^2}
    {   
 \dot{\sigma}_{\tilde{m}n_{\tilde{m}}k} }|{g}_{\tilde{m}n_{\tilde{m}}k}|^2$, 
$|\Breve{g}_{\tilde{m}n_{\tilde{m}}l}|^2
=\frac{
   G_{\tilde{m}n_{\tilde{m}}k}\tau_p\rho_kp_k\mathcal{E}_{\tilde{m}n_{\tilde{m}}k}^2|\bm{\varphi}_{\tilde{m}n_{\tilde{m}}k}^H\bm{\varphi}_{\tilde{m}n_{\tilde{m}}l}|^2}
    {
     \dot{\sigma}_{\tilde{m}n_{\tilde{m}}l} 
    }|{g}_{\tilde{m}n_{\tilde{m}}l}|^2
$,\\
$
|\Breve{g}_{{\tilde{m}n_{\tilde{m}}}{\ddot{l}}}|^2
=
\frac{
G_{\tilde{m}n_{\tilde{m}}k}\tau_p\rho_{\dot{l}}p_{\dot{l}}\mathcal{E}_{{{\tilde{m}n_{\tilde{m}}}}{\dot{l}}}^2|\bm{\varphi}_{{{\tilde{m}n_{\tilde{m}}}}\dot{l}}^H\bm{\varphi}_{{{\tilde{m}n_{\tilde{m}}}}\ddot{l}}|^2
}
{
 \dot{\sigma}_{{\tilde{m}n_{\tilde{m}}}k}
}|{g}_{{\tilde{m}n_{\tilde{m}}}{\ddot{l}}}|^2
$, 
and 
$
|\Breve{g}_{{\tilde{m}n_{\tilde{m}}}{\Breve{l}}}|^2
=
\frac{
\tau_p\rho_{\Breve{l}}p_{\dot{l}}\mathcal{E}_{{{\tilde{m}n_{\tilde{m}}}}{\dot{l}}}^2
}
{
 \dot{\sigma}_{{\tilde{m}n_{\tilde{m}}}k}
}|{g}_{{\tilde{m}n_{\tilde{m}}}{\Breve{l}}}|^2
$.
Note that     
    $ |{g}_{\tilde{m}n_{\tilde{m}}k}|^2$
    , 
    $ |{g}_{\tilde{m}n_{\tilde{m}}l}|^2 $
    , 
    $ |{g}_{{\tilde{m}n_{\tilde{m}}}\ddot{l}}|^2$, 
    and 
    $ |{g}_{{\tilde{m}n_{\tilde{m}}}\Breve{l}}|^2$
represent different order statistical rvs of gamma rvs which prevents the direct utilization of \textit{Welch-Satterthwaite} approximation as in \textbf{Appendix A}. 
This is due to the introduction of random variable ordering required for SIC-based detection which requires the pre-ordering of different users' signals based on their different power levels. However, a closed-form expression for the outage probability can be derived using order statistics theory as in \cite{GCoMP}.
%which was not done here due to space limitations on this work.
We can also notice from (\ref{SINR_3}) that the CSI estimation error contains portions from all users including those with higher power levels than that of the $k$-th user. This is because the SIC operation is also affected by channel estimation errors \cite{SIC_IMPERFECT_CSI}.
%======================================================================= 
\subsection{Diversity Combining Scheme}
Several diversity techniques can be used at multi-antenna transmitters/receivers to combat multipath fading. However, maximal ratio combining (MRC) and selection combining (SC) are the most common diversity combining techniques \cite{simonmk05}.
In this work, we propose the use of a modified version of \textit{Wiener-Hopf} multiple antennas combining scheme under the existence of co-channel interference \cite{iet:/content/books/ew/sbew046e}.
Accordingly, the gain factor $\bm{G}_{\tilde{m}k}=\left[G_{\tilde{m}1k} \dots G_{\tilde{m}{\mathcal{N}}_{\tilde{m}}k}\right]^H$ that is used in combining signals at the $\tilde{m}$-th AP for the detection of $k$-th user signal is defined as \cite{iet:/content/books/ew/sbew046e}
\begin{equation}
    \bm{G}_{\tilde{m}k}=\bm{R}_{\tilde{m}}^{-1}\hat{\bm g}_{\tilde{m}k},
\end{equation}
where $\bm{R}_{\tilde{m}}=\text{Cov}\left(\sum_{l=1, l\neq k}^K\sqrt{{p}_l}\bm{\hat g}_{\tilde{m}l}+\tilde{\bm \eta}_{\tilde{m}}\right)$,   $\bm{\hat g}_{\tilde{m}l}=\left[{\hat g}_{\tilde{m}1l} \dots {\hat g}_{\tilde{m}\mathcal{N}_{\tilde{m}} l} \right]^H$, 
and
$\tilde{\bm \eta}_{\tilde{m}}=\left[\tilde{ \eta}_{\tilde{m}1} \dots \tilde{ \eta}_{\tilde{m}\mathcal{N}_{\tilde{m}}} \right]^H$.  
Furthermore, the covariance matrix $\bm{R}_{\tilde{m}}$ conditioned on instantaneous CSI can be written as
\begin{equation}
\bm{R}_{\tilde{m}}=\sum_{l=1, l\neq k}^K p_l\hat{\bm g}_{\tilde{m}l}\hat{\bm g}_{\tilde{m}l}^H+ \bm{\tilde{\Sigma}}_{\tilde{m}},
\end{equation}
where $\bm{\tilde{\Sigma}}_{\tilde{m}}\in \mathbb{C}^{\mathcal{N}_{\tilde{m}}\times \mathcal{N}_{\tilde{m}}}$ is a diagonal matrix with $n_{\tilde{m}}$-th diagonal element given by ${\tilde{\Sigma}}_{\tilde{m}n_{\Tilde{m}}}=\tilde{\sigma}_{\tilde{m}n_{\tilde{m}}}^2/2$.
Note that, when $K=1$, we have $G_{\tilde{m}k}=2\hat{g}_{\tilde{m}k}/\tilde{\sigma}_{\tilde{m}n_{\tilde{m}}}$ which is identical to the MRC scheme that represents the optimal combining scheme under no IUI. 
Additionally, note that we use the estimated values of CSI ($\hat{g}_{\tilde{m}n_{\tilde{m}}k}$) without considering the CSI estimation error. However, several works in the literature discussed the optimal diversity combining scheme under imperfect CSI estimation \cite{DAS_IMPERFECT_CSI_1,DAS_IMPERFECT_CSI_2}, which is however not investigated in this paper for brevity.  
%======================================================================= 
\subsection{Formulation of Optimization Problem }
For the proposed dynamic cell-free network, the process of AP clustering needs to be performed jointly with beamforming design.
This can be achieved by forming a general optimization problem that jointly finds the optimal cluster set-beamforming vector pairs such that a certain objective is satisfied.
%Here, we assume that the number of clusters $\bar{M}\leq M$ is constant design parameter.
To achieve the best gain from SIC detection, we propose to order users based on their overall weighted power gain of the communication channel such that
$p_1\sum_{\tilde{m}=1}^{\tilde{M}}~
    \sum_{n_{\tilde{m}}=1}^{\mathcal{N}_{\tilde{m}}}G_{\tilde{m}n_{\tilde{m}}k}^2
    |\hat{g}_{\tilde{m}n_{\tilde{m}}1}|^2 
    \leq \dots \leq
    p_k\sum_{\tilde{m}=1}^{\tilde{M}} 
    \sum_{n_{\tilde{m}}=1}^{\mathcal{N}_{\tilde{m}}}G_{\tilde{m}n_{\tilde{m}}k}^2
    |\hat{g}_{\tilde{m}n_{\tilde{m}}k}|^2~~
    \leq
    \dots
    \leq
p_K\sum_{\tilde{m}=1}^{\tilde{M}} 
    \sum_{n_{\tilde{m}}=1}^{\mathcal{N}_{\tilde{m}}}G_{\tilde{m}n_{\tilde{m}}k}^2
    |\hat{g}_{\tilde{m}n_{\tilde{m}}K}|^2$.
Accordingly, the general problem of AP clustering and beamforming optimization can be formulated as
\begin{equation}
\begin{aligned}
& ~\textbf{P}_1: \underset{n, \bm{W}}{\text{max}}~
%\underset{k=1, \dots K}{\text{min}}
& \text{\hspace{-40mm}} \sum_{k=1}^K\log_2\left(1+\gamma_{k}^{\{\mathcal{C}_n\}} \right)\\
& ~\text{Subject to:} \\
& ~\textbf{C}_1: \sum_{m=1}^{M} \left(w_{m \delta_l}^2-\sum_{i=\delta_l+1}^{l}w_{mi}^2 \right)\bar{\gamma}_{\tilde{m}n_{\tilde{m}}l}\geq P_{\text{s}}, \\
& ~  \textbf{C}_2: 0\leq w_{\tilde{m}k}\leq 1 ,
\\
&  ~  \textbf{C}_3:||\bm{w}_{k}||^2\leq 1,
\\
&~~\forall k,  \forall~\delta_l= 1, \dots, l-1\;\mbox{and}\; l=2, \dots, K,\\
\end{aligned}\label{OptimizationProb}
\end{equation}
where $\bar{\gamma}_{\tilde{m}n_{\tilde{m}}l}=\frac{2p_lG_{\tilde{m}n_{\tilde{m}l}}}{\tilde{\sigma}_{\tilde{m}n_{\tilde{m}}}^2}|\hat{g}_{\tilde{m}n_{\tilde{m}l}}|^2$. 
The constraints $\textbf{C}_1$ refer to the set of $\sum_{l=2}^K(l-1)=\frac{K(K-1)}{2}$ conditions required for successful SIC operation with receiver sensitivity of $P_{\text{s}}$. For maximizing the minimum rate, the objective function of the optimization problem will be replaced by $\underset{k}{\text{min}} \log_2\left(1+\gamma_{k}^{\{\mathcal{C}_n\}} \right)$.
Note that the receiver deals only with measured channel values (estimated) which include the estimation error as well as the AWGN part. However, the achieved SINR value after the SIC procedure will be decreased by pilot contamination components (as can be noticed from the second part of Eq. (\ref{SINR_3})).  

To optimally solve the problem $\textbf{P}_1$ in (\ref{OptimizationProb}), 
we need to simultaneously solve for $\mathcal{C}_n$ and $\bm{W}$.
Specifically, for every possible clustering configuration, there is a related optimal beamforming vector that maximizes the objective function.
{\em The globally optimal solution is then the one that gives the best performance among all clustering configurations and their corresponding optimized beamforming vectors ($\bm{W}$).
Such a problem belongs to the set of complete non-deterministic polynomial-time hard (NP-hard) problems.
%This is due to the fact that while it is easy to confirm whether a proposed solution is valid, it may inherently be prohibitively difficult to determine in the first place whether any solution exists. 
Furthermore, given that a certain clustering configuration is selected, solving $\textbf{P}_1$ w.r.t $\bm{W}$ grows exponentially with $\tilde{M}$ and/or $K$ which makes the cell-free network model  impractical for a network with a massive number of devices communicate through a massive number of APs within a  geographic area.
Therefore, for  practical implementation of a dynamic cell-free network, we design a novel hybrid DDPG-DDQN-based DRL system that produces the best clustering configuration and the beamforming vectors such that the objective function in $\textbf{P}_1$ is maximized.}
%=============================================================
%\subsection{Complexity Analysis}

%======================================================================= 

%===========================================================================================
\section{DRL Implementation of AP Clustering and Beamforming Design}

In this section, we introduce a novel DRL model that solves the general optimization problem (i.e. problem $\textbf{P}_1$) of jointly optimizes the clustering of APs and the beamforming vectors. 

\subsection{Theoretical Preliminaries}
The concept of reinforcement learning (RL) refers to the learning process of an agent interacting with its environment after receiving certain observations. The environment provides a reward to the agent for every interaction and
the RL agent aims to select the right action for the next interaction in order to maximize the discounted reward over a time horizon. 
This problem can be formulated as a \textit{Markov decision
process} (MDP). An MDP is a tuple ($\mathcal{\bm S}$, $\mathcal{A}$, $\mathcal{P}$, $\mathcal{R}$, $\zeta$), where $\mathcal{S}$ represents the states space which contains the set of $K$-dimensional states (with each state at time $t$ denoted by $\bm{s}_t$), $\mathcal{A}$ is the action space that contains a finite set of actions from which the agent can choose, $\mathcal{P}$ : $\mathcal{S}$ $\times$ $\mathcal{A}$ $\times$ $\mathcal{S}$ $\rightarrow$ [0, 1] is a transition probability in which $\mathcal{P}(\bm{s}, a, \bm{s}')$ defines the probability of observing state $\bm{s}'$ after executing action $a$ in the state $\bm{s}$, $\mathcal{R}: \mathcal{S} \times \mathcal{A} \rightarrow \mathbb{R}$ is the expected reward after being in state $\bm{s}$ and taking action $a$, and $\zeta$ $\in [0, 1)$ is the discount factor.
To solve the MDP,  RL algorithms have been developed to learn and find a discrete value function or a ``policy". Such a discretization can lead to lack of generalization and significantly increase the problem's dimensionality. 
Therefore, deep RL (DRL) algorithms based on function approximation by deep neural networks (DNNs) have been proposed.

DRL algorithms can be classified into three types: (i) \textit{value-based} methods such as  deep Q-learning (DQL) and SARSA which only learn the so-called  value function to find a policy, (ii) \textit{policy-based} methods which learn the policy directly by following the gradient with respect to the policy, and (iii) \textit{actor-critic} methods which are a hybrid of the value-based for the critic and policy-based methods for the actor.
\iffalse
\cite{DRL_1}:
\begin{itemize}
    \item \textit{Value-based} methods such as  deep Q-learning (DQL) and SARSA which only learn the so-called  value function to find a policy.
    \item \textit{Policy-based} methods which learn the policy directly by following the gradient with respect to the policy itself in  order to keep improving the policy constantly. These methods suffer from noisy gradients and high variance.
    \item \textit{Actor-critic} methods which are a hybrid of the value-based methods for the critic and the policy-based methods for the actor. The critic reduces the variance of policy gradient methods by estimating the action-value function $Q(\bm{s},a)$.
\end{itemize}
\fi

The standard DQL method represents the most popular update algorithm in the literature due to the availability of a mature theory. Basically, the DQL update equation at time $t$ for a network agent with parameters $\theta^Q$ after taking action $a_t$ in state $\bm{s}_t$ and observing the
immediate reward $r_{t+1}$ and resulting state $\bm{s}_{t+1}$ is:
\begin{equation}
\label{eq:dqn-update equation}
\begin{aligned}
Q(\bm{s}, a \,|\, \theta^Q_{t+1}) &= Q(\bm{s}, a\,|\, \theta^Q_t) + \nu \bigg[r_{t+1} + \zeta  \max_{a'} Q(\bm{s}_{t+1}, a'\,|\, \theta^Q_t) - Q(\bm{s}_t, a_t\,|\, \theta^Q_t)\bigg]\\
&\text{\hspace{-30mm}}= Q(\bm{s}, a\,|\, \theta^Q_t) + \nu \bigg[r_{t+1} + \zeta \max_{a'} Q(\bm{s}_{t+1}, \argmax_{a'} Q(\bm{s}_{t+1},a'\,|\, \theta^Q_t)\,|\, \theta^Q_t) - Q(\bm{s}_t, a_t\,|\, \theta^Q_t)\bigg],
\end{aligned}
\end{equation}
where $\nu$ is the learning rate.
Computing the term $ \underset{a'}{\text{max}}~ Q(\bm{s}_{t+1}, \underset{a'}{\text{argmax}}  Q(\bm{s}_{t+1},a'\,|\, {\theta}^Q_t)\,|\, \theta^Q_t)$ introduces a systematic overestimation of the Q-values during the learning that is accentuated by the use of bootstrapping, i.e. learning estimates from estimates. The Q-learning update in (\ref{eq:dqn-update equation}) uses the same Q-network $Q(\bm{s}, a\,|\, {\theta}^Q_t)$ both to select and to evaluate an action. After highlighting the overestimation bias in experiments across different Atari game environments,  Hasselt et al. \cite{van2016deep} decoupled the action selection and evaluation by introducing two deep Q-networks,  a  $Q$ network and a target network $Q'$ with different parameters $\theta^Q$ and ${\theta}^{Q'}$, respectively, to avoid the maximization bias.
The  $Q'$ network is used for action selection while the $Q$ network is used for action evaluation.
This is known as {\em deep double Q-learning algorithm (DDQL)}.
The DDQL update equation of the network can be expressed as:
\begin{dmath}
\text{\hspace{-2mm}}Q(\bm{s}, a \,|\, {\theta}^Q_{t+1}) = Q(\bm{s}, a\,|\, {\theta}^Q_t) 
\\
+ \nu \bigg[r_{t+1} + \zeta \max_{a'} Q(\bm{s}_{t+1}, \argmax_{a'} Q'(\bm{s}_{t+1},a'\,|\, {\theta}^{Q'}_t)\,|\, {\theta}^Q_t) - Q(\bm{s}_t, a_t\,|\, {\theta}^Q_t)\bigg].
\label{eq:ddqn-update-q1-equation}
\end{dmath}
The parameters ${\theta}^{Q'}$ of the  $Q'$ network periodically hard-copy the parameters $\theta^Q$ of $Q$ network after $t_0$ time steps using the Polyak averaging method with parameter $\tau \in [0,1]$:
\begin{equation}
\label{eq:ddqn-update-q2-equation}
{\theta}^{Q'}_{t+t_0} = (1-\tau) \,{\theta}^{Q'}_t + \tau \,\theta^Q_t.
\end{equation}

The DDQL networks show a better performance that standard DQL \cite{van2016deep}; however, due to the discretization requirements of the DNN outputs (the action space $\mathcal{A}$), it results in a huge expansion of the action space dimensionality when used in the optimization of an objective function of continuous dependent variables.
{\em This dimensionality issue makes it an unattractive solution for solving the beamforming problem under massive number of users and APs.}
However, it is a relevant candidate for the clustering problem of the APs since it avoids the need for an extremely inefficient exhaustive search method. 
This  motivates us to utilize the ``DDPG" policy for the beamforming design problem.

DDPG belongs to the class of actor-critic algorithms. It concurrently learns a Q-function network approximation $Q(\bm{s}, a|\theta^{Q})$ called the critic, and a policy network approximation $\mu(\bm{s}|\theta^{\mu})$ called the actor. 
The Q-function network is trained using the Bellman equation, while the policy network is learnt using the Q-function. 
Unlike the DQL policies which output the probability distribution $\pi(a|\bm{s})$ across a discrete action space $\mathcal{A}$,
the policy network of DDPG directly maps states to actions. 
Specifically, at every time step $t$, it maximizes its loss function defined as:
\begin{equation}
\label{eq:loss-function-actor}
J(\theta) = \mathbb{E}\bigg[Q(\bm{s}, a) \;|\; \mathcal{S}=\bm{s}_t, a=\pi(a|\bm{s}_t)\bigg]
\end{equation}
and updates its weights $\theta$ by following the gradient of (\ref{eq:loss-function-actor}):
\begin{equation}
\label{eq:gradient-loss-function-actor}
\nabla J_{\theta^{\mu}}(\theta) \approx \nabla_{a} Q(\bm{s},a) \,\nabla \mu(s|\theta^{\mu}).
\end{equation}
This update rule represents the deterministic policy gradient (DPG) theorem, rigorously proved by Silver et al. in the supplementary material of \cite{silver2014deterministic}.
The term $\nabla_{a} Q(\bm{s},a)$ is obtained from a Q-network  $Q(\bm{s}, a|\theta^{Q})$ called the critic by backpropagating its output w.r.t. the action input $\mu(\bm{s}|\theta^{\mu})$. When the number of actions is very large, this actor-critic training procedure solves the intractability problem of DQN \cite{mnih2015human}  by using the following approximation:
\begin{equation}
\label{eq:max-approx-ddpg}
\max_{a} Q(\bm{s}, a) \approx Q(\bm{s}, a |\theta^{Q})|_{a=\mu(\bm{s}|\theta^{\mu})}.
\end{equation}
Similar to DQN, two tricks are employed to stabilize the training of the DDPG actor-critic architecture, namely, 1) the experience replay buffer $R$ to train the critic, and 2) target networks for both the actor and the critic which are updated using the polyak averaging in the same way it was done in (\ref{eq:ddqn-update-q2-equation}).
Now that we have provided a brief DRL background on the two methods used by our proposed algorithm (DDQL and DDPG), 
a detailed description of the proposed DRL-based AP clustering and beamforming design for a dynamic cell-free network can be presented in the following subsection.
%=========================================================================
\subsection{DRL Agent Design for AP Clustering and Beamforming Optimization}
Our goal is to design a DRL system that jointly optimizes the clustering of APs and the beamforming vectors given a certain CSI matrix
$\bm{H}=
\left[
{\tilde{\bm{g}}}_{1}
\dots 
{\tilde{\bm{g}}}_{K}
\right]$.
In this context, we develop a hybrid DDPG-DDQL DRL scheme that simultaneously learns the best clustering subsets-beamforming vector pairs given a certain CSI matrix $\bm{H}$ (or any other metric related to $\bm{H}$).
Fig. \ref{AI_Model} shows a schematic block diagram of the developed hybrid DDPG-DDQL DRL scheme.
   	\begin{figure}[htb]
		\centering
		\includegraphics[scale=0.35]{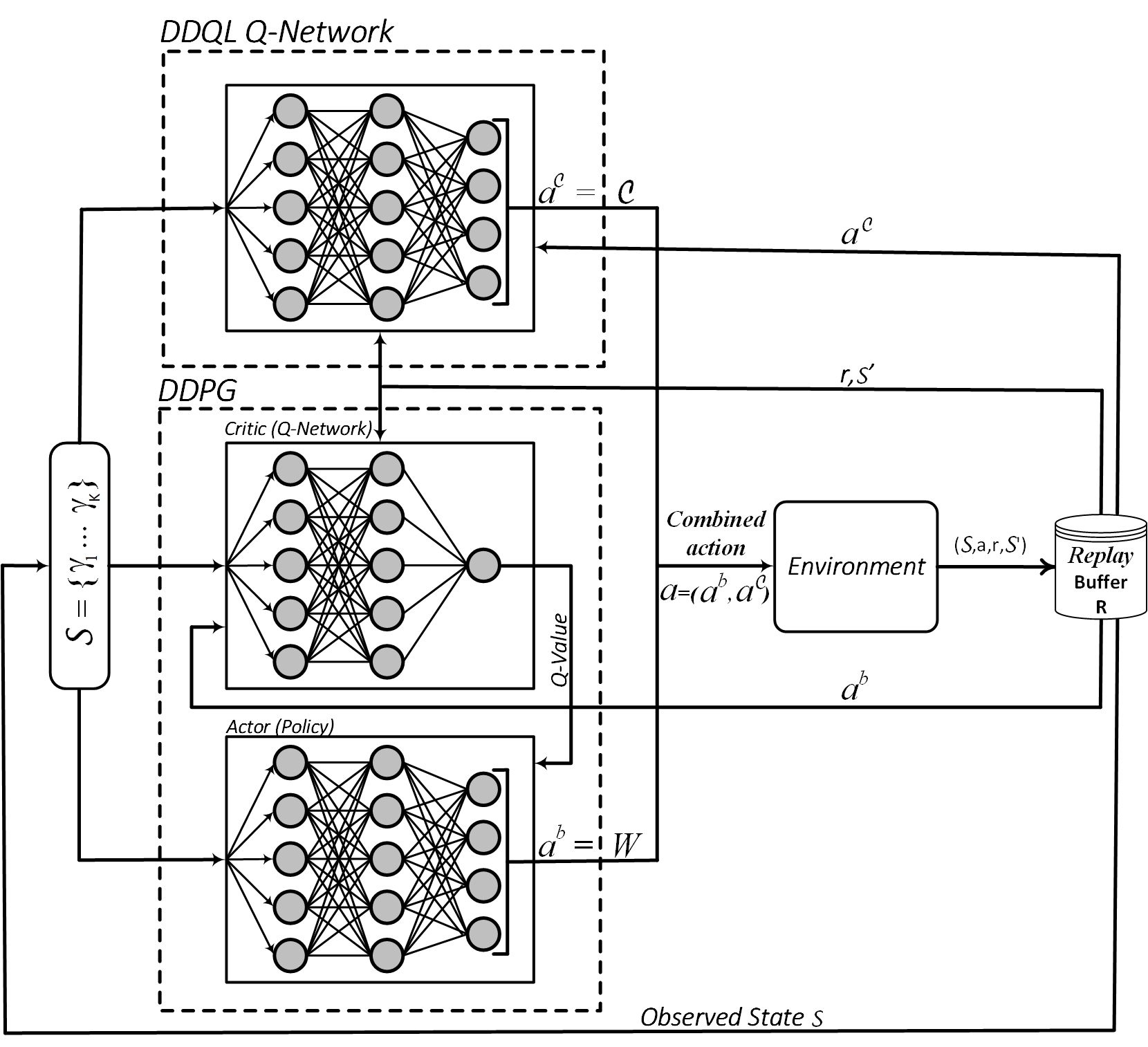}
		\caption{Hybrid DDPG-DDQL model for AP clustering and uplink beamforming design. }\label{AI_Model}
	\end{figure} 
To find the beamforming decoding matrix of the $k$-th user $\bm{W}_k=\text{Reshap}(\bm{w}_{k})$, where $\bm{W}_k\in \mathbb{R}^{\nint*{\sqrt{\sum_{\tilde{m}=1}^{\tilde{m}}\sum_{n_{\tilde{m}}=1}^{\mathcal{N}_{\tilde{m}}} \bm{w}_{k}}}
\times
\nint*{\sqrt{\sum_{\tilde{m}=1}^{\tilde{m}}\sum_{n_{\tilde{m}}=1}^{\mathcal{N}_{\tilde{m}}} \bm{w}_{k}}}}$\footnote{Note that the empty elements on $\bm{W}_k$ are zero padded.} and  $\bm{w}_{k}=\left[w_{1k} \dots w_{\tilde{M}k}\right]^H$, we opt for the actor-critic DDPG algorithm \cite{lillicrap2015continuous} since all the  elements $w_{i,j}$ of $\bm{W}_k$ are in the continuous range $[0,1]$.
For the AP clustering problem, we use the well-known DDQL algorithm \cite{van2016deep} to find the best access point clustering configuration since the possible number of clustering configurations is finite and the index of each possible configuration is an integer number. 

As shown in Fig.~\ref{AI_Model}, the two algorithms interact with a simulated cell-free network environment to solve the optimization problem $\textbf{P}_1$ in (\ref{OptimizationProb}).
The design of the cell-free network environment involves the specification of the environment state $\bm{s}$ and the definition of immediate reward function $r$ required by the DRL algorithms to approximate the policies and the Q-values. 
The state of the environment is a vector of the SINR values of users, where the state $\bm{s}_k$ of the $k^{th}$-user is chosen to be the SINR value $\gamma_{k}^{\mathcal{C}_n}$ which is a varying function of the instantaneous CSI matrix.
%:\begin{itemize}
%    \item the SINR of the fully centralized uplink precoding problem as defined in Eq. (\ref{SINR_1}).
%    \item the SINR of the uplink precoding problem with clustering as defined in Eq. (\ref{SINR_2}).
%    \item the SINR of the problem of the uplink precoding with clustering with Noma as defined in Eq. (\ref{SINR_3}).
%\end{itemize}
Under this setup, the action space $\mathcal{A}$ is the pair of actions $a=(a^{b}, a^{\mathcal{C}}) = (\bm{W}_k,\mathcal{C}_n)$ that the DDPG and the DDQN output separately. 
The superscripts $b$ and $\mathcal{C}$ refer to  ``\textit{beamforming}" and ``\textit{clustering}", respectively. 
After receiving the environment's state $\bm{s}$, the DDPG algorithm outputs the action $a^{b} = \bm{W}_k$ and the DDQL outputs  the action $a^{\mathcal{C}}$ representing the cluster partition $\mathcal{C}_n$.
%Finally, the reward $r$ follows the fair criteria, i.e. the minimum among all users' rates:
%\begin{equation}
%\label{eq:reward-definition}
%r = \sum_{k=1}^Klog(1+\gamma_{k}^{\mathcal{C}_n}).
%\end{equation}

Table \ref{Table3} summarizes the environment design by specifying the additional problem parameters.
\begin{table}[h!]
    \centering
      \caption{DRL agent design}
\begin{tabular}{|c|c|}
\hline
Environment Variables & System Equivalence  \\
\hline
\hline
State ${\bm{s}}=\{{s}_1, \dots, {s}_K \}$ & $\{\gamma_{1}^{\mathcal{C}_n}, \dots, \gamma_{K}^{\mathcal{C}_n}\}$\\
\hline
Reward $r$ & $ \sum_{k=1}^K  \;\log\left(1+\gamma_{k}^{\mathcal{C}_n}\right)$\\
\hline
Action $\mathcal{A}$ & $(a^{b}, a^{\mathcal{C}}) = (\bm{W}_k, \mathcal{C}_n)$ \\
\hline
$n=1, \dots, N$ & Index of possible Clustering \\
\hline
$\mathcal{C}_n$ & $n$-th clustering configuration \\
\hline
$K$ & Number of active users \\
\hline
\end{tabular}
    \label{Table3}
\end{table}
{\em Note that when the number of clusters equals to the total number of APs ($\tilde{M}=M$), the network will reduce to the static cell-free network with $\mathcal{C}_n=\mathcal{C}, ~  n=1$}. 
Accordingly, only DDPG model will be required to optimize the beamforming vectors ($\bm{W}_k$).\\
Our TensorFlow/Keras implementation of the actor and critic networks (including their corresponding target networks) have two hidden layers of 256 and 128 neurons, respectively. The DDQN networks have two 
fully-connected layers of 64 neurons followed with an activation function \textit{relu} each, and a final linear fully-connected layer. We use a discount factor $\zeta = 0.99$, a learning rate $\nu = 5 \,\cdot\,10^{-5}$, a Polyak averaging parameter $\tau=10^{-3}$, and an experience replay buffer of size $R=20000$. The critic optimizer is Adam with its default hyperparameters $\beta_1=0.9$ and $\beta_2=0.999$. We train all the networks on 2500 episodes, with 500 time step each.
%=====================================================================
\subsection{Description of the Hybrid DDPG-DDQL Algoritm}

For the static cell-free network, only the DDPG algorithm is trained (one clustering configuration). However, for the dynamic cell-free network, both the DDPG and DDQL networks are trained.
\begin{itemize}
    \item \underline{DDPG network}:
    \begin{itemize}
        \item \textit{The actor network $\mu (\bm{s} | \theta^{\mu})$} maps the SINR values of users to the beamforming matrix $\bm{W}_k$. The output of the network is $a^{b}$, i.e. a flatten list of all the elements $w_{ij} \in [0,1]$.
        \item \textit{The target actor network $\mu' (\bm{s} | \theta^{\mu'})$}: time-delayed copy of the actor network $\mu (\bm{s} | \theta^{\mu})$.
        \item \textit{The critic network $Q(\bm{s}, a^b | \theta^{Q})$}: maps the SINR values and the output action of $\mu (\bm{s} | \theta^{\mu})$ to their corresponding Q-value.
        \item \textit{The target critic network $Q'(\bm{s}, a^b | \theta^{Q'})$}: time-delayed copy of the critic network $Q(\bm{s}, a^b | \theta^{Q})$.
    \end{itemize}
    \item \underline{DDQL network}:
    \begin{itemize}
        \item \textit{The $Q_c$-network $Q_c(\bm{s}, a^{\mathcal{C}} | \theta^{Q_c})$} maps the SINR values of users to the the Q-values of the state and all the clustering partitions.
        \item \textit{The target $Q_c$-network $Q_c'(\bm{s}, a^{\mathcal{C}} | \theta^{Q_c'})$}: time-delayed copy of the $Q_c$-network $Q(\bm{s}, a^{\mathcal{C}} | \theta^{Q_c})$.
    \end{itemize}
\end{itemize}

We describe all of the training steps of our algorithm in \textbf{Algorithm \ref{algo:DRL-for-free-cell-network}}. We start by initializing all neural networks and their targets for the beamforming and clustering problems as well as a replay buffer $R$ (lines \ref{algo:init1}--\ref{algo:init5}). For every episode, we initialize the environment with $N$ access points and $K$ users by setting the initial state $\bm{s}_0$ to a random vector of SINR values of size $K$ (line \ref{algo:init-env}). At every time step $t$ of the episode, the DDQL and DDPG agents pick an action $a^{\mathcal{C}}_t$ and $a^b_t$ respectively (lines \ref{algo:pick-ab}--\ref{algo:pick-ac}). The combined action $a_t = (a^b_t, a^{\mathcal{C}}_t)$ is sent to the free-cell network environment which will transit to a new state $\bm{s}_{t+1}$, and this new state will be returned together with the immediate reward $r_t$ (lines \ref{algo:combine-ab-ac}--\ref{algo:get-state-reward}). After storing the transition tuple $(\bm{s}_{t}, a_{t}, r_{t}, s_{t+1})$ in the replay buffer $R$ (line \ref{algo:store-transition}), we randomly sample from the experience replay buffer $N$ transitions to train the DDPG and DDQL networks (line \ref{algo:sample-replay-buffer}). We start the DDPG training in line (\ref{algo:ddpg-td-target}) by computing the TD target for the Q-network $Q(\bm{s}, a | \theta^{Q})$ using the target Q-network $Q'(\bm{s}, a | \theta^{Q'})$. We update the critic $Q(\bm{s}, a | \theta^{Q})$'s parameters $\theta^{Q}$ in line \ref{algo:update-critic} using the gradient of the MSE of the loss function of the TD target and the output of the critic. The update of the actor's parameters $\theta^{\mu}$ uses the Monte Carlo approximation of gradient in line \ref{eq:gradient-loss-function-actor}. 
The target critic and target policy networks get updated slowly every $P$ iterations (lines \ref{algo:update-critic-target}--\ref{algo:update-actor-target}). Finally, we update the parameters $\theta^{Q_c}$ of the DDQL Q-network using the Bellman equation in line (\ref{algo:update-ddql-q-network}) after selecting the action using the target Q-network $Q'_c(\bm{s}, a | \theta^{Q'_c})$ in line \ref{algo:ddql-action-selection}. Similar to the DDPG target network, we update in line \ref{algo:update-ddql-target} the DDQL target Q-network every $P$ iterations.
\begin{algorithm}
\small
\caption{Hybrid DDPG-DDQL algorithm for uplink beamforming and clustering}\label{algo:DRL-for-free-cell-network}
\begin{algorithmic}[1]
\State Randomly initialize the critic $Q(\bm{s}, a | \theta^{Q})$ and the actor $\mu(\bm{s} | \theta^{\mu})$ with weights $\theta^{Q}$ and $\theta^{\mu}$ \label{algo:init1}
\State Initialize target network $Q'$ and $\mu'$ with weights $\theta^{Q^{\prime}} \leftarrow \theta^{Q}$, $\theta^{\mu^{\prime}} \leftarrow \theta^{\mu}$ \label{algo:init2}
\State Randomly initialize the $Q_c$-network $Q_c(\bm{s}, a | \theta^{Q_c})$ \label{algo:init3}
\State Initialize the target network $Q'_c(\bm{s}, a | \theta^{Q'_c})$ with weights $\theta^{Q_c^{\prime}} \leftarrow \theta^{Q_c}$ \label{algo:init4}
\State Initialize replay buffer R \label{algo:init5}
\For {$episode=1,\dots, E$}
\State Receive initial observation state $\bm{s}_1$ after initializing the environment \label{algo:init-env}
\For {$t=1,\dots, T$}
\State  Select the beamforming action $a^{b}_{t}=\mu\left(\bm{s}_{t} | \theta^{\mu}\right)$ \label{algo:pick-ab}
\State  Select the clustering action $a^{\mathcal{C}}_t=\arg \max _{a^{\mathcal{C}}} Q_c\left(\bm{s}_t, a^{\mathcal{C}}\right)$ \label{algo:pick-ac}
\State Define $a_t = (a^b_t, a^{\mathcal{C}}_t)$ \label{algo:combine-ab-ac}
\State Execute action $a_{t}$ and observe reward $r_{t}$ and observe new state $\bm{s}_{t+1}$ \label{algo:get-state-reward}
\State Store transition $(\bm{s}_{t}, a_{t}, r_{t}, \bm{s}_{t+1})$ in $R$ \label{algo:store-transition}
\State Sample a random minibatch of $\mathcal{L}$ transitions $(\bm{s}_{i}, a_{i}, r_{i}, \bm{s}_{i+1})$ from $R$ \label{algo:sample-replay-buffer}
\State Get $a^b_i$ and $a^{\mathcal{C}}_i$ from $a_i$
\Statex\LeftComment{2}{\underline{Training the DDPG networks}}
\State Compute the TD target $y_{i}=r_{i}+\zeta\, Q^{\prime}\left(\bm{s}_{i+1}, \mu^{\prime}\left(\bm{s}_{i+1} | \theta^{\mu^{\prime}}\right) | \theta^{Q^{\prime}}\right)$  \label{algo:ddpg-td-target}
\State Update the critic $Q(\bm{s}, a | \theta^{Q})$ by minimizing the loss: $L=\frac{1}{\mathcal{L}} \sum_{i}\left(y_{i}-Q\left(\bm{s}_{i}, a^b_{i} | \theta^{Q}\right)\right)^{2}$ \label{algo:update-critic}
%\vspace{-7mm}
\Statex \qquad \quad  Update the actor policy $\mu(\bm{s} | \theta^{\mu})$ using a monte-carlo approximation of (\ref{eq:gradient-loss-function-actor}):
\State \qquad \qquad$\left.\left.\nabla_{\theta^{\mu}} J \approx \frac{1}{\mathcal{L}} \sum_{i} \nabla_{a} Q\left(\bm{s}, a | \theta^{Q}\right)\right|_{\mathcal{S}=\bm{s}_{i}, a=\mu\left(\bm{s}_{i}\right)} \nabla_{\theta^{\mu}} \mu\left(\bm{s} | \theta^{\mu}\right)\right|_{\mathcal{S}=\bm{s}_{i}}$
\Statex \qquad \quad Update the DDPG target networks $Q'$ and $\mu'$ \textbf{if} $mod(t, P)=0$:
\State \qquad \qquad ${\theta^{Q^{\prime}} \leftarrow \tau \theta^{Q}+(1-\tau) \theta^{Q^{\prime}}}$ \label{algo:update-critic-target}
\State \qquad \qquad ${\theta^{\mu^{\prime}} \leftarrow \tau \theta^{\mu}+(1-\tau) \theta^{\mu^{\prime}}}$ \label{algo:update-actor-target}
\Statex\LeftComment{2}{\underline{Training the DDQL networks}}
\State select $a^{*}=\arg \max _{a} Q'_c\left(\bm{s}_{i+1}, a| \theta^{Q'_c}\right)$   \label{algo:ddql-action-selection}
\Statex \qquad \quad Update the $Q_c$ usin{}g:
\State \qquad \qquad $Q_c(\bm{s}_i, a^c_i | \theta^{Q_c}) \leftarrow Q_c(\bm{s}_i, a^c_i| \theta^{Q_c})+\nu\,\left(r_i+\zeta\, Q_c\left(s_{i+1}, a^{*}| \theta^{Q_c}\right)-Q_c(\bm{s}_i, a^{\mathcal{C}}_i| \theta^{Q_c})\right)$ \label{algo:update-ddql-q-network}
%\vspace{-7mm}
\Statex \qquad \quad Update the DDQL target networks $Q_c'$ \textbf{if} $mod(t, P)=0$:
\State \qquad \qquad ${\theta^{Q_1^{\prime}} \leftarrow \tau \theta^{Q_1}+(1-\tau) \theta^{Q_1^{\prime}}}$ \label{algo:update-ddql-target}
\EndFor

%\Statex\LeftComment{1}{compute $\mu^t_{i \rightarrow (i,j)}(\mathbf{u}_i)$ and $\nu^t_{j \rightarrow (i,j)}(\mathbf{v}_j)$ $\forall i,j$}
\EndFor
%\EndProcedure
\end{algorithmic}
\end{algorithm}
%%*************************************************************************************************
%\textcolor{red}{@Mohamed: add this section then check with yasser}
%\subsection{Combined access point clustering and decoding scheme using DDPG}
%\textcolor{red}{@Mohamed: add this section then check with yasser}
%http://asl.stanford.edu/wp-content/papercite-data/pdf/Chinchali.ea.AAAI18.pdf
%\newpage
%=============================
%===========================
%===========================
Note that the DRL model can be easily extended to  optimize also the  uplink transmission power for users $P_k, \forall k=1, \dots, K$. However, the power control information will then need to be transmitted to the users, which will increase the signaling traffic significantly.
\section{Numerical Results and Discussions}
%This section provides some insightful results from both analytical and simulation parts of the studied schemes accompanied with different comparisons and discussions under different operating parameters. 

%=======================================================================
\subsection{Parameters and Assumptions}

%We provide the set of parameters used in generating different results.

\begin{table}[h!]
    \centering
      \caption{Simulation parameters}
\begin{tabular}{|c|c|}
\hline 
Parameter & Value  \\
\hline
\hline
AWGN PSD per UE &
$-169$ dBm/Hz  \\
\hline
Path-loss exponent, $\kappa$ & $2$\\
\hline
Nakagami parameters, $(\mathcal{M},\Omega)$ & $(1,1)$\\
\hline
Training sequence length, $\tau_p$ & $K$ Samples\\
\hline
Pilot transmission power, $\rho_k$ & $100$ mW, ${\forall k}$\\
\hline
SIC sensitivity, $P_s$  & $1$ dBm
\\
\hline
\end{tabular}
    \label{Table2}
\end{table}

Table \ref{Table2} presents  the  main  system  parameters  used
to obtain simulation and analytical results. To simplify simulation and analysis of this work and to concentrate on the most insightful conclusions, we establish some operating assumptions related to channel training and CSI fading models.
We assume that users' pilot signals are mutually orthonormal such that $\tau_p=K\leq \tau_c$ with zero AWGN component in CSI estimation, i.e. $|\bm{\varphi}_{lmn}^H\bm{\varphi}_{xyz}|=1$ if $(l,m,n)=(x,y,z)$ or zero otherwise.
$\forall~(l,x)=1, \dots ,\tilde{M}$, $(m,y)=1, \dots, \mathcal{N}_{(l,x)}$ and $(n,z)=1, \dots, K$.
This will result in removal of the CSI estimation error parts (due to both pilot contamination and AWGN component, $\bm{\eta}_{\tilde{m}n_{\tilde{m}}}$) and considering $\hat{g}_{xyz}={g}_{xyz}$.
Additionally, we assume that all channel fading gains $h_{\tilde{m}n_{\tilde{m}}k}$ are independent and identically distributed (i.i.d) with $\mathcal{M}_{\tilde{m}n_{\tilde{m}}k}=\mathcal{M}=1$ and $\Omega_{\tilde{m}n_{\tilde{m}}k}=\Omega=1$.
Similarly, we assume that all AWGN values belong to a set of i.i.d rvs with PSD $\sigma_{\tilde{m}n_{\tilde{m}}}/2=-169$ dBm/Hz. 
As for large-scale fading, we assume that all APs and users are uniformly distributed over a disc of radius 18m (corresponding to a network total coverage area of $1 \text{km}^2$.

First, we  investigate the accuracy of \textit{Welch-Satterthwaite} sum of gamma rvs approximation method 
(Fig. \ref{Welch_Figure_1}).
\begin{figure}[htb]
		\centering
		\includegraphics[scale=0.6]{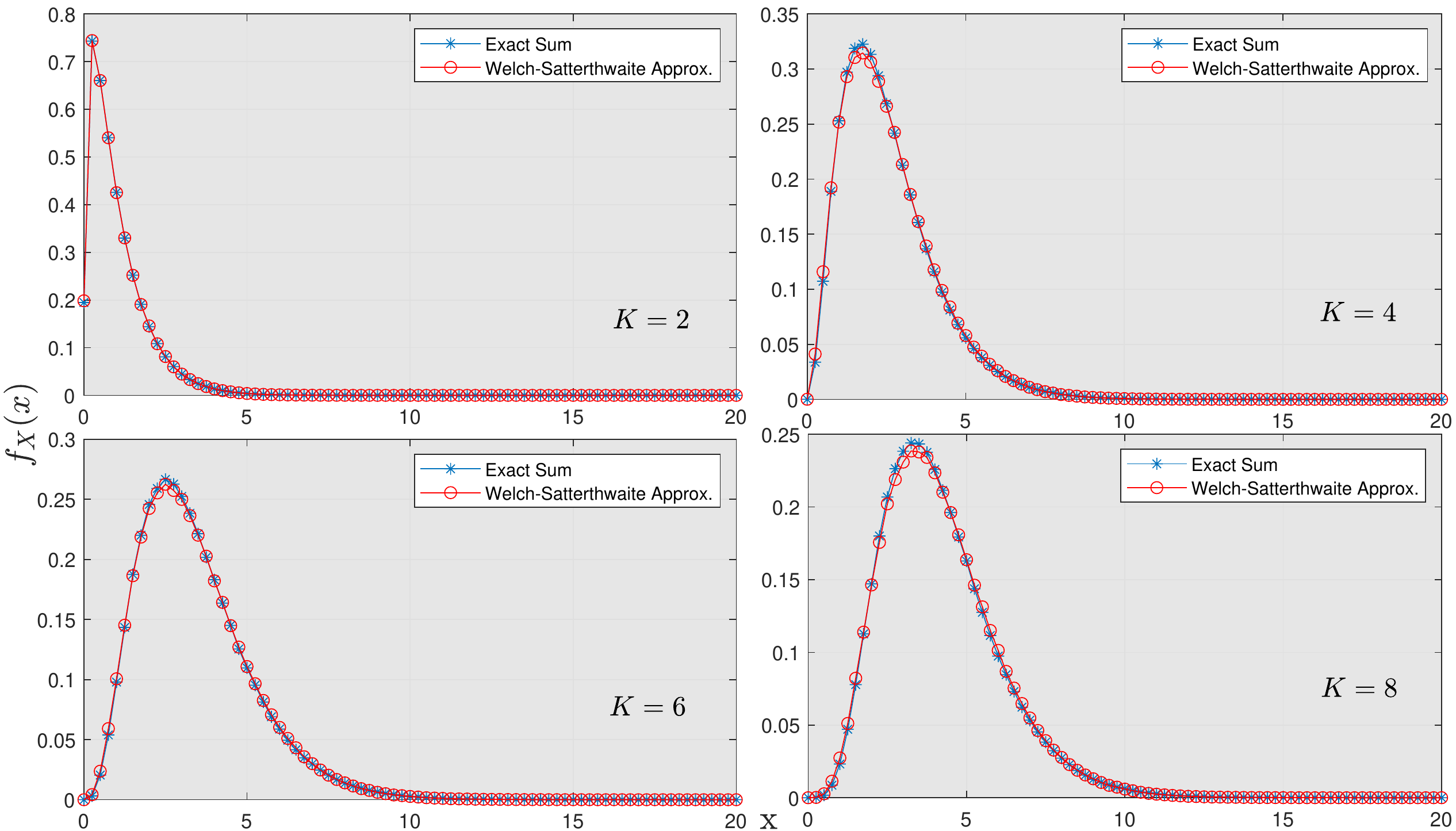}
		\caption{An illustrative example of \textit{Welch-Satterthwaite} approximation with different $\#$ of rvs ($K$).}\label{Welch_Figure_1}
	\end{figure}
	It can be noticed from Fig. \ref{Welch_Figure_1} that a satisfactory accuracy for the sum of non-identically distributed (different $\beta$) gamma rvs can be achieved by \textit{Welch-Satterthwaite} approximation for small, medium, and high numbers of random variables.
	This justifies the use of such an approximation instead of the use of the central limit theorem which requires a relatively large number of random variables.
%=======================================================================
\subsection{Outage Performance}
This section discusses the outage performance of cell-free network in a massive communication regime. 
Each simulation value is obtained via  $2\times10^6$ Monte-Carlo simulation runs. In Fig. \ref{Figure_Outage_DRL_1} (a),
we evaluate the  outage probability for a user in a static cell-free  network under different values of $K$.
\begin{figure}[htb]
		\centering
		\includegraphics[height=9cm, width=14cm]{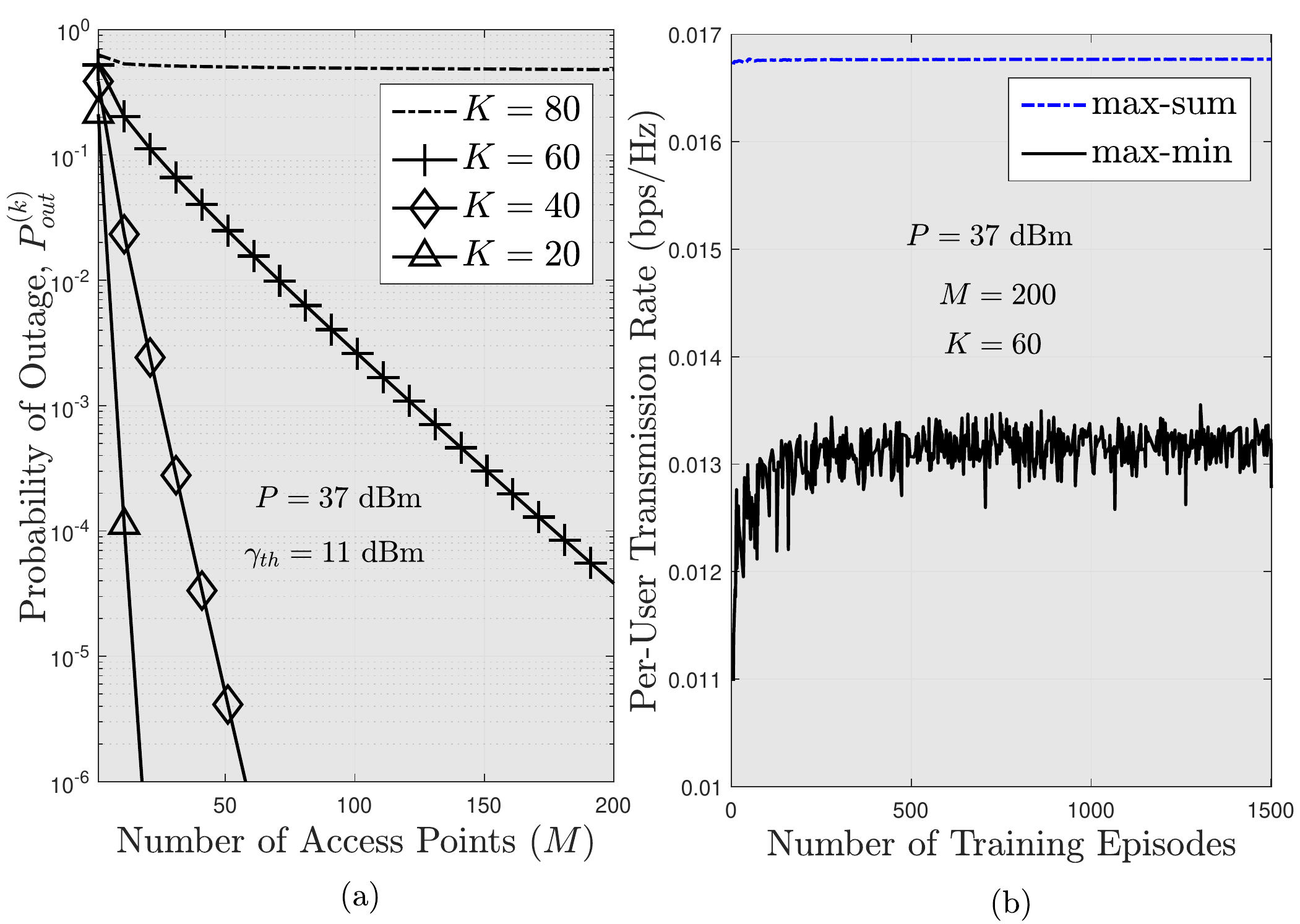}
		\caption{User performance: (a) Probability of outage,  (b) Transmission rate per user. }\label{Figure_Outage_DRL_1}
	\end{figure}
Fig.~\ref{Figure_Outage_DRL_1} (a) shows that an excellent outage performance can be achieved when the number of APs ($M$) is much grater than the number of users $K$, i.e. $M>>K$. 
However, when $K$ becomes comparable to $M$, the outage performance deteriorates significantly with almost total blockage at $K\approx 0.5 M$.
Note that this figure is produced assuming no precoding scheme, i.e. $w_{mk}=c, \forall m, k$. A better performance can be achieved by first solving problem $\textbf{P}_1$ to obtain $\bm{W}_k$ and then substituting $\bm{W}_k$ into (\ref{e_outage_1}).

%Matlab: M150 K50 \rightarrow sum: 0.0888, min: 0.0177
%======================================================
\subsection{DRL Model}
In this section, we study the performance of using the proposed DRL model in solving problem $\textbf{P}_1$.
Fig. \ref{Figure_Outage_DRL_1} (b) shows the per-user transmission rate in the cell-free network with the proposed DRL method.
We can notice that the performance reaches a certain level after around $500$ training episodes and then stays around that level.
Compared with the max-min objective, we notice a significant performance gain with the max-sum objective.
This is due to the fact that, with the max-sum objective, we optimize $\bm{W}_k, \forall k=1, \dots, K$ while with max-min objective, we maximize only $\bm{W}_{k'}$ and use $W_k, k\neq k'~\&~k=1, \dots, K$, where $k'$, denoting the index of the user with minimum achievable rate, is found from previous iterations.

To evaluate the performance of the proposed dynamic cell-free clustering scheme with SIC detection, Fig. \ref{Figure_Two_DRL_1} a shows the per-user transmission rate for both static and dynamic cell-free networks.  
It can be noticed that utilizing SIC at the receiver side significantly compensates for the performance loss caused by clustering of the APs. Note that the per-user transmission rate achieved without SIC in presence of AP clustering is significantly lower than that in a static cell-free network.
 
Additionally, in Fig. \ref{Figure_Two_DRL_1}(b), we compare the performance of the proposed DRL-based solution approach compared to traditional methods.
\begin{figure}[htb]
		\centering
		\includegraphics[height=9cm, width=14cm]{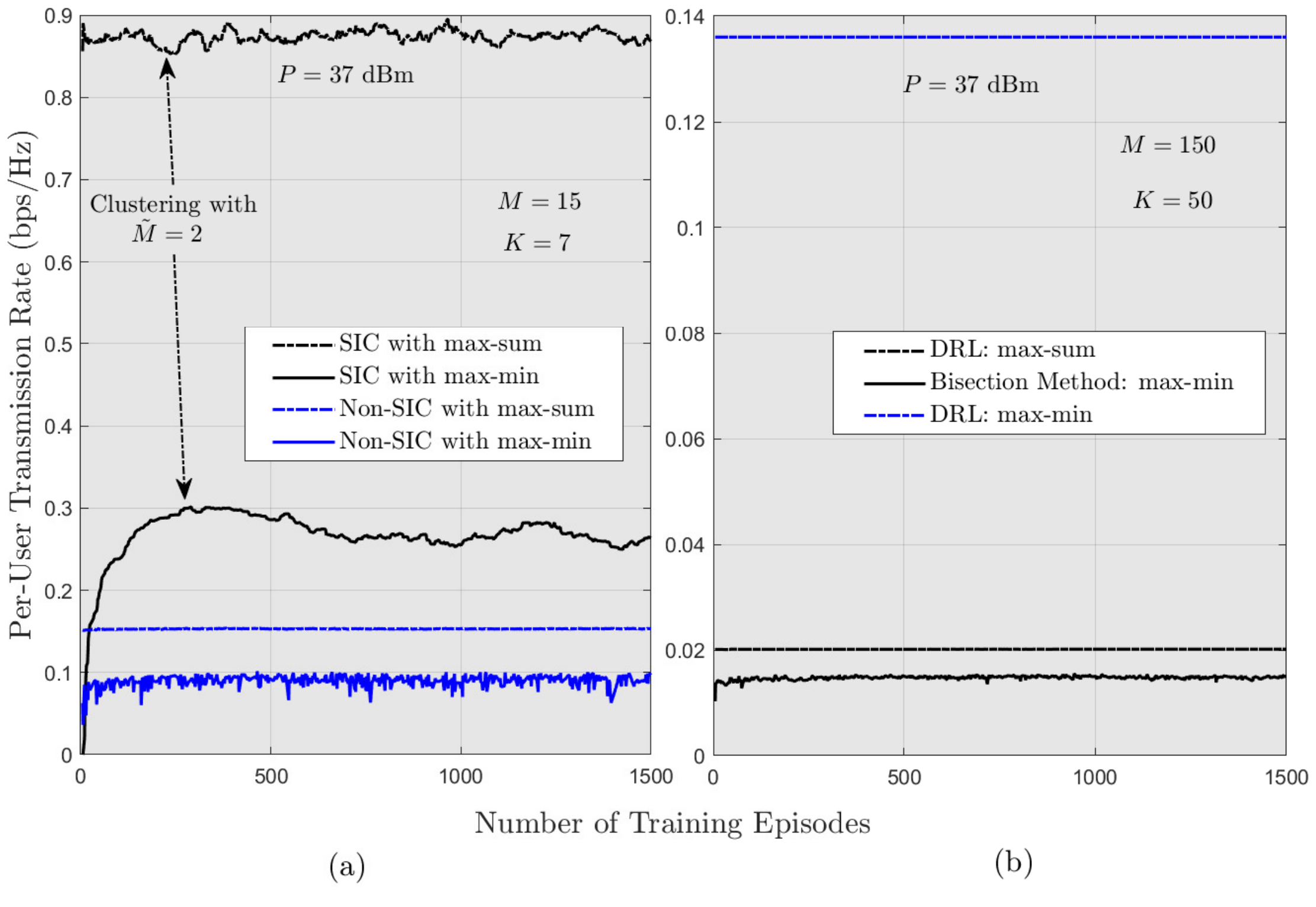}
		\caption{Per-user transmission rate: (a) with SIC, (b) optimal vs. DRL.}\label{Figure_Two_DRL_1}
\end{figure}
We can notice that DRL can achieve per user transmission rate of around $78\%$ of that with conventional bisection method (subgradient method). 
This performance degradation comes with a remarkable decrease in the computational complexity. With the proposed DRL-based design, the beamforming vectors can be obtained in an online manner for practical implementation of non-orthogonal multiple access in cell-free networks.

Figs. \ref{RMSE} (a) and (b) evaluate the convergence rate of the proposed DRL algorithm for two different network setups.
Specifically, we plot the absolute value of the difference between the normalized rate values generated by the DRL model and that by the bisection method (using the Matlab optimization toolkit).
\begin{figure}[htb]
		\centering
		\includegraphics[height=9cm, width=14cm]{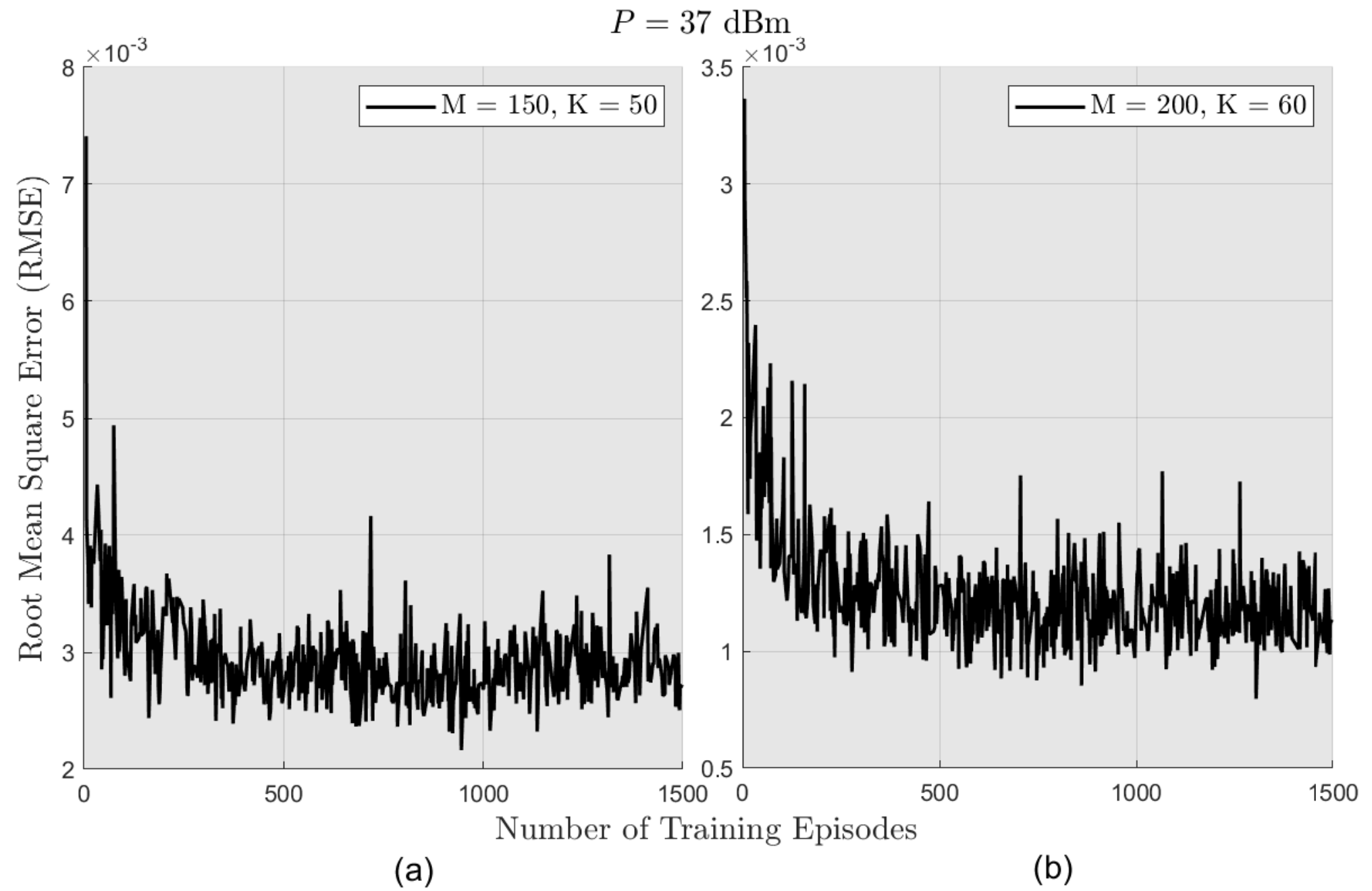}
		\caption{DRL convergence rate: (a) Medium-scale scenario, (b) large-scale scenario.}\label{RMSE}
\end{figure}
It can be noticed that the proposed DRL model shows a better convergence rate than the bisection method under medium-scale scenario with $M = 150$ and $K = 50$ (Fig. \ref{RMSE} [a]) compared to that of large-scale scenario with $M = 200$ and $K = 60$ (Fig. \ref{RMSE} [b]).
However, the proposed DRL model under large-scale scenario is found to achieve a better performance by achieving an approximate of $92\%$ of the normalized rate achieved by bisection method compared to only $78\%$ achieved under medium-scale scenario.   

Finally, to evaluate the performance of the cell-free network under different values of per-user transmission power at the large-scale regime, Fig. \ref{DRL_Vs_Power} illustrates the variation in per-user transmission rate in a large-scale static cell-free network for different transmit power values.
\begin{figure}[htb]
		\centering
		\includegraphics[height=10cm, width=11cm]{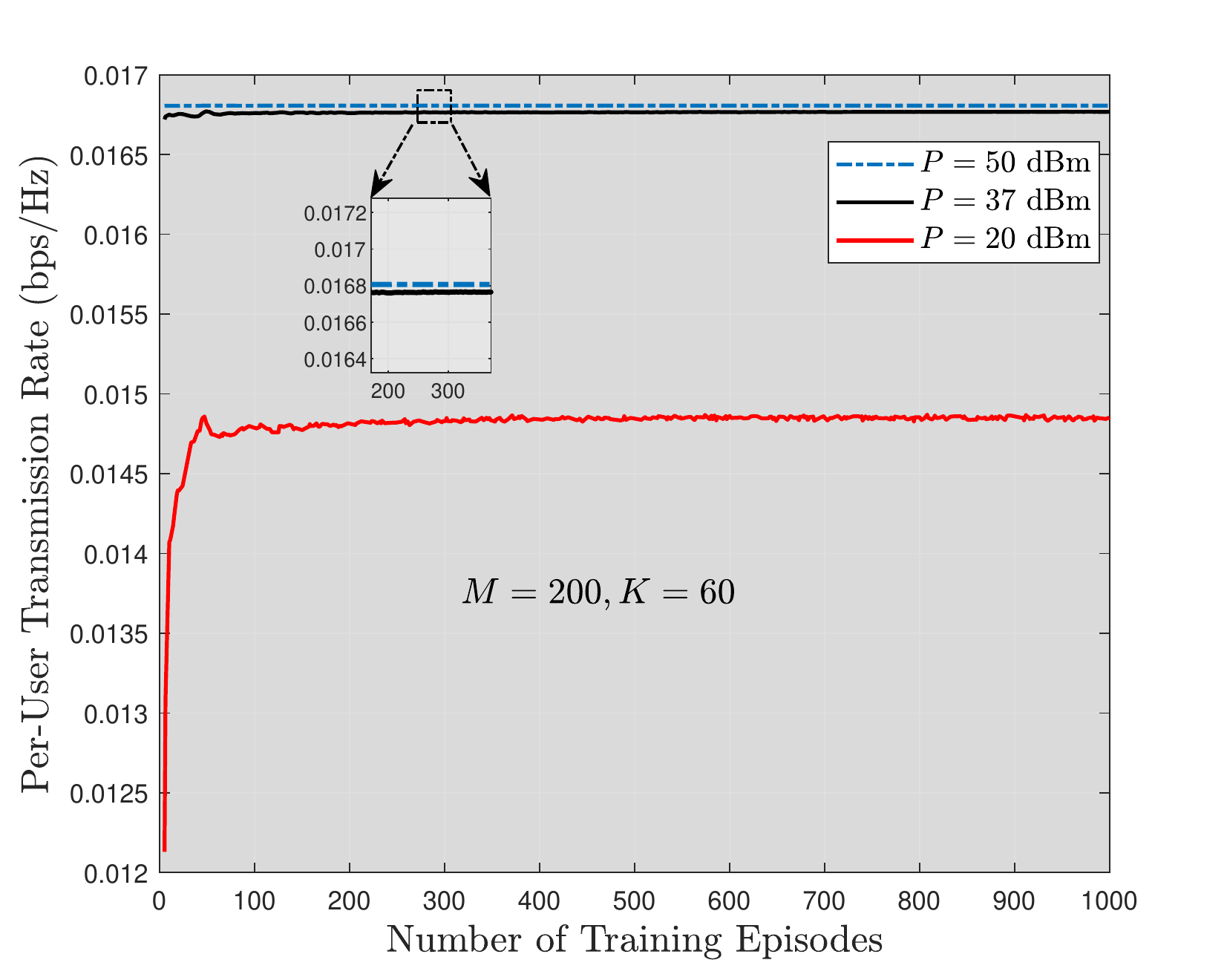}
		\caption{Per-user transmission rate vs. $P$.}\label{DRL_Vs_Power}
\end{figure}
It can be noticed that rate enhancement becomes smaller when the transmission power is increased in a high $P$ regime (e.g. from $P = 37$ dBm to $P = 50$ dBm). 
%This is due to the fact that  transmission power appears at both the numerator and the denominator of the SINR of a user which removes its effect on performance at high values. 
However, a good rate enhancement still can be achieved when $P$ is changed from a relatively small value to a significantly larger value (e.g. from $P = 20$ dBm to $P = 37$ or $P=50$ dBm). Furthermore, the DRL model is found to show a faster convergence with larger values of $P$.  
%==========================
%%%%%%%%%%%%%%%%%%%%%%%%%%%%%%%%%%%%%%%%%%%%%%%%%%%%%%%%%%%%%%%%%%%%%%%%%%%%%%%%%%%%%%%
\section{Conclusion}
 We have first derived closed-form expressions for the  probability of outage for an uplink user in a static  cell-free network.
Next, we have proposed a novel dynamic cell-free network that partitions the distributed APs among a set of subgroups with each subgroup acting as a virtual AP equipped with a distributed antenna system (DAS).
Such a clustering is performed based on the current channel state information (CSI) among the APs and all users within the network coverage area. The dynamic cell-free network model can reduce the complexity for joint signal processing. For the dynamic cell-free model, we have  formulated the general optimization problem of  clustering the APs and designing the beamforming vectors. To solve this optimization problem, we have proposed a hybrid deep deterministic policy gradients (DDPG)-double deep Q-network(DDQN) scheme that jointly selects the optimal network clustering scheme with its optimal beamforming vector values. Possible extensions of this work would include the design and evaluation of more comprehensive DRL models that jointly estimate CSI, select the best clustering configuration for APs, and optimize different beamforming vectors. Also, benchmarking different DRL-based algorithms for optimizing resource allocation in different cell-free network architectures will be valuable.

%This was achieved based on the users' sum-rate optimization scheme for multiple access scenarios. The proposed dynamic cell-free network with hybrid DRL model offered a significant complexity-delay reduction with performance loss satisfactorily compensated by the utilization of SIC detection at the receiver side. Furthermore, sum-rate optimization scheme was found to  significantly overcome the max-min counterpart by better per-user average performance. Finally, the proposed DDPG-DDQN scheme was found to achieve around $78\%$ rate compared to that of exhaustive search. However, complexity and processing time was decreased significantly by the adoption of such DRL model.  
%%%%%%%%%%%%%%%%%%%%%%%%%%%%%%%%%%%%%%%%%%%%%%%%%%%%%%%%%%%%%%%%%%%%%%%%%%%%%%%%%%%%%%%

	\appendices
	\section{}
\setcounter{equation}{0}
First, let us define $X=\sum_{m=1}^M|\tilde{g}_{mk}|^2$ and
\[
Y =    \sum_{m=1}^M
    \left[
    \sum_{l=1, l\neq k}^K|\tilde{g}_{ml}|^2
    +
    \sum_{\dot{l}=1, \dot{l}\neq k}^K
    \sum_{\ddot{l}=1, \ddot{l}\neq \dot{l}}^K
    |\tilde{g}_{m{\ddot{l}}}|^2
    +
    \sum_{\Breve{l}=1, \Breve{l}\neq k}^K|\tilde{g}_{m{\Breve{l}}}|^2
    \right].
    \]
    Let $W=Y+1\approx Y$. (This assumption is valid for large values of $K$ and $M$.) 
The derivation of closed-form expression for $X$ and $Y$ is very complicated due to the non-equal rate parameters ($\tilde{\beta}_{m,k}$ and $\tilde{\beta}_{m,l}$) which makes the MGF method unusable to derive $f_X(x)$ and $f_Y(y)$. Here, we use an accurate approximation of the sum of independent Gamma rvs with different shape and rate parameters, which is referred to as the \textit{Welch-Satterthwaite} approximation \cite{Satterthwaite1946}. Accordingly, we have $X\thicksim \mathcal{G}\left(\Dot{\alpha}_{mk}, \Dot{\beta}_{mk} \right)$ and 
$Y\thicksim \mathcal{G}\left(\Dot{\alpha}_{mk'}, \Dot{\beta}_{mk'} \right)$, 
where 
\[\Dot{\alpha}_{mk}=\frac{\left(\sum_{m=1}^M\tilde{\alpha}_{mk}/ \tilde{\beta}_{mk}\right)^2}{\sum_{m=1}^M\tilde{\alpha}_{mk}/\tilde{\beta}_{mk}^2},~~~~
\Dot{\beta}_{mk}=\frac{\sum_{m=1}^M\tilde{\alpha}_{mk}/\tilde{\beta}_{mk}^2}{\sum_{m=1}^M\tilde{\alpha}_{mk}/ \tilde{\beta}_{mk}},
\]
\[\Dot{\alpha}_{mk'}=\frac{
\left(
\sum_{m=1}^M
    \left[
    \sum_{l=1, l\neq k}^K
    \frac{    \tilde{\alpha}_{ml}}{ \tilde{\beta}_{ml}}
    +
    \sum_{\dot{l}=1, \dot{l}\neq k}^K
    \sum_{\ddot{l}=1, \ddot{l}\neq \dot{l}}^K
    \frac{    \tilde{\alpha}_{m\ddot{l}}}{ \tilde{\beta}_{m\ddot{l}}}
    +
    \sum_{\Breve{l}=1, \Breve{l}\neq k}^K
    \frac{\tilde{\alpha}_{m\Breve{l}}}{        \tilde{\beta}_{m\Breve{l}}}
    \right]
\right)^2
}
{
\sum_{m=1}^M
    \left[
    \sum_{l=1, l\neq k}^K
    \frac{\tilde{\alpha}_{ml}}{    \tilde{\beta}_{ml}^2}
    +
    \sum_{\dot{l}=1, \dot{l}\neq k}^K
    \sum_{\ddot{l}=1, \ddot{l}\neq \dot{l}}^K
    \frac{    \tilde{\alpha}_{m\ddot{l}}}{\tilde{\beta}_{m\ddot{l}}^2}
    +
    \sum_{\Breve{l}=1, \Breve{l}\neq k}^K
    \frac{\tilde{\alpha}_{m\Breve{l}}}{    \tilde{\beta}_{m\Breve{l}}^2}
    \right]
},
\]
\[\Dot{\beta}_{mk'}=
\frac{
\sum_{m=1}^M
    \left[
    \sum_{l=1, l\neq k}^K
    \frac{\tilde{\alpha}_{ml}}{\tilde{\beta}_{ml}^2}
    +
    \sum_{\dot{l}=1, \dot{l}\neq k}^K
    \sum_{\ddot{l}=1, \ddot{l}\neq \dot{l}}^K
    \frac{    \tilde{\alpha}_{m\ddot{l}}}{\tilde{\beta}_{m\ddot{l}}^2}
    +
    \sum_{\Breve{l}=1, \Breve{l}\neq k}^K
    \frac{\tilde{\alpha}_{m\Breve{l}}}{    \tilde{\beta}_{m\Breve{l}}^2}
    \right]
}
{
 \sum_{m=1}^M
    \left[
    \sum_{l=1, l\neq k}^K
    \frac{    \tilde{\alpha}_{ml
}}{\tilde{\beta}_{ml}}
    +
    \sum_{\dot{l}=1, \dot{l}\neq k}^K
    \sum_{\ddot{l}=1, \ddot{l}\neq \dot{l}}^K
    \frac{\tilde{\alpha}_{m\ddot{l}
}}{    \tilde{\beta}_{m\ddot{l}}}
    +
    \sum_{\Breve{l}=1, \Breve{l}\neq k}^K
    \frac{    \tilde{\alpha}_{m\Breve{l}
}}{\tilde{\beta}_{m\Breve{l}}}
    \right]
}.
\]
Note that this approximation becomes identical to the exact expression even for small to moderate number of APs and/or UEs. This makes it useful for modeling large-scale cell-free architectures. 
%The PDF of $W$ can be found through a simple rv transformation as $f_W(w)=f_Y(w-1), w\geq 1$. 
The PDF of the ratio $Z=\frac{x}{w}$ can be defined as
\begin{equation}
    f_Z(z)=\frac{d}{dz}\int_{0}^{\infty}\text{P}\left(x\leq z  w\right)f_Y(w)dw=\int_{0}^{\infty}wf_X(zw)f_Y(w)dw.\label{PDF_Integral_1}
\end{equation}
 Substituting $f_X(zw)$ and $f_Y(w-1)$ into \textbf{A}.1, utilizing \cite[Eq. 3.383.4]{gradshteyn2000}, and simplifying, we obtain  Eq. (\ref{PDF_Centralized}).
%=========================================================
\section{}
\setcounter{equation}{0}
The CDF of the ratio distribution ($Z=\frac{X}{W}$) can be expressed as
\begin{equation}
    F_Z(z)=\text{P}\left(x\leq z W\right) =\int_{0}^{\infty}\text{P}\left(x\leq z w\right)f_Y(w)dw.\label{CDF_Integral_1}
\end{equation}
By substituting $f_Y(w)$ and $P\left(x\leq zw \right)=1-\gamma\left(\Dot{\alpha}_{mk},\Dot{\beta}_{mk}zw\right)$, where $\gamma(.)$ is the lower incomplete gamma function \cite[Eq. 6.5.2]{1965}, and utilizing \cite[Eq. 6.455.2]{gradshteyn2000} and \cite[Eq. 15.3.7]{1965}, and simplifying, we obtain Eq. (\ref{e_outage_1}).
%\hspace{3mm}
\bibliographystyle{IEEEtran}
\bibliography{IEEEabrv,yasser}

\end{document}